\newcommand{\Order}{\mathrm{O}}
\newcommand{\Exp}{\mathbb{E}}
\newcommand{\Rank}{\textsc{Ranking}}
\newcommand{\KVV}{\textsc{KVV}}
\newcommand{\Greedy}{\textsc{Greedy}}
\newcommand{\eps}{\varepsilon}
\newcommand{\OPT}{\textsc{opt}}
\newcommand{\opt}{\OPT}
\newcommand{\ALG}{\textsc{alg}}
\newcommand{\alg}{\ALG}
\newcommand{\qStr}{{q\text{-}\textsc{sgkh}}}
\newcommand{\cStr}{{c!\text{-}\textsc{sgkh}}}
\newcommand{\mat}{\textsc{mat}}
\newcommand{\OO}{\ensuremath{{O}}}
\newtheorem{theorem}{Theorem}
\newtheorem{corollary}{Corollary}
\newtheorem{lemma}{Lemma}
\newtheorem{definition}{Definition}
\title{On the Power of Advice and Randomization for Online Bipartite Matching \footnote{Research supported in part by the ANR projects ANR-11-BS02-0015,  ANR-15-CE40-0015, ANR-12-BS02-005, by the Icelandic Research Fund grants-of-excellence no.\ 120032011 and 152679-051 and by the European Research Council (ERC) under the European Union’s Horizon 2020 research and innovation programme no.\ 648032.}}
\author{Christoph D\"{u}rr%
\thanks{Sorbonne Universités, UPMC Univ Paris 06, CNRS, LIP6, Paris, France}
\and
Christian Konrad%
\thanks{Reykjavik University, Reykjavik, Iceland}
\and
Marc Renault%
\thanks{IRIF, CNRS,  Universit\'{e} Paris Diderot, Paris, France}}
\begin{document}

\maketitle

\begin{abstract}
While randomized online algorithms have access to a sequence of uniform random bits, deterministic
online algorithms with advice have access to a sequence of {\em advice bits}, i.e., bits that are set
by an all-powerful oracle prior to the processing of the request sequence.
Advice bits are at least as helpful as random bits, but how helpful are they? In this work, we investigate the
power of advice bits and random bits for online maximum bipartite matching (\textsc{MBM}).

The well-known Karp-Vazirani-Vazirani algorithm \cite{kvv90} is an optimal
randomized $(1-\frac{1}{e})$-competitive algorithm for \textsc{MBM} that requires access to
$\Theta(n \log n)$ uniform random bits. We show that $\Omega(\log(\frac{1}{\epsilon}) n)$ advice
bits are necessary and $\Order(\frac{1}{\epsilon^5} n)$ sufficient in order to obtain a
$(1-\epsilon)$-competitive deterministic advice algorithm. Furthermore,
for a large natural class of deterministic advice algorithms, we prove that
$\Omega(\log \log \log n)$ advice bits are required in order to improve on the $\frac{1}{2}$-competitiveness of
the best deterministic online algorithm, while it is known that $\Order(\log n)$
bits are sufficient \cite{bkkk11}.

Last, we give a randomized online algorithm that uses $c n$ random bits, for integers $c \ge 1$,
and a competitive ratio that approaches $1-\frac{1}{e}$ very quickly as $c$ is increasing.
For example if $c = 10$, then the difference between $1-\frac{1}{e}$ and the achieved competitive ratio
is less than $0.0002$.
\end{abstract}

\section{Introduction}
\paragraph*{Online Bipartite Matching}
The maximum bipartite matching problem (\textsc{MBM}) is a well-studied problem in the area of online
algorithms \cite{kvv90,bm08,djk13}. Let $G = (A, B, E)$ be a bipartite graph with
$A = [n]  := \{1, \dots, n \}$ and $B = [m]$, for some integers $n,m$. We
assume $m = \Theta(n)$ allowing bounds to be stated as simple functions of $n$ rather than of $n$ and $m$.
The $A$-vertices together with their incident edges
arrive online, one at a time, in
some adversarial chosen order $\pi: [n] \rightarrow [n]$. Upon arrival of a vertex $a \in A$,
the online algorithm has to irrevocably decide to which of its incident (and yet unmatched) $B$-vertices
it should be matched.
The considered quality measure is the well-established {\em competitive ratio}
\cite{st85}, where
the performance of an online algorithm is compared to the performance of the best
offline algorithm: A randomized online algorithm $\textbf{A}$ for \textsc{MBM}
is $c$-competitive if the matching $M$ output by $\textbf{A}$ is such that
$\mathbb{E} |M| \ge c \cdot |M^*|$, where the expectation is taken over the random coin flips,
and $M^*$ is a maximum matching.

In 1990, Karp, Vazirani and Vazirani \cite{kvv90} initiated research on
online \textsc{MBM} and presented a $(1-\frac{1}{e})$-competitive randomized algorithm denoted \KVV. It chooses
a permutation $\sigma: [m] \rightarrow [m]$ of the $B$-vertices uniformly at random and then runs
the algorithm $\Rank(\sigma)$,
which matches each incoming $A$-vertex $a$ to the free incident $B$-vertex $b$ of minimum rank
(i.e., $\sigma(b) < \sigma(c)$ for all free incident vertices $c \neq b$).
If there is no free $B$-vertex, then $a$ remains unmatched.
They showed that no online algorithm has a better competitive ratio than $1-\frac{1}{e}$, implying
that \KVV~is optimal.
For deterministic online algorithms, it is well-known that the \textsc{Greedy} matching algorithm,
which can be seen as running $\Rank(\sigma)$ using a fixed arbitrary $\sigma$, is $\frac{1}{2}$-competitive,
and is optimal for the class of deterministic online algorithms.

\paragraph*{Improving on $1-\frac{1}{e}$} Additional assumptions are needed in order to improve on the
competitive ratio $1-\frac{1}{e}$. For example, Feldman et al.~\cite{fmmm09} introduced the online stochastic matching
problem, where a bipartite graph $G' = (A', B', E')$ and a probability distribution $\mathcal{D}$ is
given to the algorithm. The request sequence then consists of vertices of $A'$ that are
drawn according to $\mathcal{D}$. Feldman et al.\ showed that the additional knowledge can be used to
improve the competitive ratio to $0.67$, which has subsequently
been further improved \cite{bk10,mgs11}. Another example is a work by Mahdian and Yan \cite{my11}, who considered
the classical online bipartite matching problem with a random arrival order of vertices. They analysed the
\textsc{KVV}~algorithm for this situation and proved that it is $0.696$-competitive.

\paragraph*{Online Algorithms with Advice}
It is a common theme in online algorithms to equip an algorithm with additional knowledge that
allows it to narrow down the set of potential future requests and, thus, design algorithms that have
better competitive ratios as compared to algorithms that have no knowledge about the future. Additional
knowledge can be provided in many different ways, e.g.\ access to lookahead \cite{hs92,g95}, probability
distributions about future requests \cite{fmmm09,my11},
or even by giving an isomorphic copy of the
input graph to the algorithm beforehand \cite{h99}. Dobrev et al. \cite{DRP2008} and later Emek et al. \cite{efkr11} first quantified
the amount of additional knowledge (advice) given to an online algorithm in an information theoretic sense.
They showed that a specific problem requires at least $b(n)$ bits of advice, for some function $b$, in order
to achieve optimality  \cite{DRP2008} or in order to achieve a particular competitive ratio \cite{efkr11}.
Advice lower bounds are meaningful in practice as they apply to any potential type of
additional information that could be given to an algorithm.

In the advice model, a computationally all-powerful oracle is given the entire request sequence and
computes an advice string that is provided to the algorithm.
Algorithms with advice are not usually designed with practical considerations in mind but to show a
theoretical limit on what can be done. As such, the algorithms are often impractical due to the nature of
the advice or the complexity in calculating the advice.
However, from a theoretical perspective, advice algorithms are necessary to determine the exact advice
complexity of online problems (how many advice bits are necessary and sufficient) and thus provide limits on
the achievable and more practically relevant lower bounds.

\paragraph*{Our Objective and Previous Results} Our objectives are to determine the advice complexity
of \textsc{MBM} and to investigate the power of random and advice bits for this problem.

A starting point is a result of B\"{o}ckenhauer et al.~\cite{bkkk11}, who gave a method that allows the
transformation of a randomized online algorithm into a deterministic one with advice with a similar
approximation ratio. More precisely, given a randomized online algorithm \textbf{A} for a minimization problem
$\mathcal{P}$ with approximation factor $c$ and possible inputs $\mathcal{I}(n)$ of length $n$,
B\"{o}ckenhauer et al.\ showed that a $(1+\epsilon)c$-competitive deterministic online algorithm \textbf{B}
with $\log n + 2 \log \log n + \log \frac{\log |\mathcal{I}(n)|}{\log (1+\epsilon)}$
bits
\footnote{Throughout the paper, logarithms, where the base is omitted, are implicitly binary logarithms.}
of advice can be deduced from \textbf{A}, for any $\epsilon > 0$,
where $\log$ is the binary logarithm in this paper.
The calculation of the advice and the computations executed by \textbf{B} require exponential time, since
\textbf{A} has to be simulated on all potential inputs $\mathcal{I}(n)$ on all potential
random coin flips.

The technique of Böckenhauer et al. \cite{bkkk11} can also be applied to maximization problems such as \textsc{MBM}\footnote{It is
straightforward to adapt the proof of Theorem~5 of \cite{bkkk11} accordlingly. For completeness, a proof is given in the full
version of this paper.}.
Applied to the $\textsc{KVV}$ algorithm, we obtain the following theorem.  For the proof see appendix~\ref{appendix:boeckenhauer}.

\begin{theorem} \label{thm:bockenhauer}
 There is a deterministic online algorithm with $\Order(\log n)$ bits
 of advice for \textsc{MBM} with competitive ratio $(1-\epsilon) (1-1/e)$, for any $\epsilon > 0$.
\end{theorem}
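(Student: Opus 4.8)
The plan is to invoke the general transformation theorem of Böckenhauer et al.\ (reference \cite{bkkk11} in the excerpt) as a black box and instantiate it with the \textsc{KVV} algorithm. Recall that Böckenhauer et al.\ showed how to convert any $c$-competitive randomized online algorithm into a deterministic online algorithm with advice whose competitive ratio is within a $(1+\epsilon)$ factor of $c$ (appropriately phrased for maximization), using $\log n + 2\log\log n + \log\frac{\log|\mathcal{I}(n)|}{\log(1+\epsilon)}$ advice bits. Since \textsc{KVV} is $(1-\tfrac{1}{e})$-competitive for \textsc{MBM}, the main work is to check that the hypotheses of that theorem are met and that the resulting advice length is $\Order(\log n)$.

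First I would state the adaptation of Böckenhauer et al.'s result to maximization problems, which the excerpt already notes is straightforward and deferred to the full version; I would cite that adapted statement. The key quantitative point is that the advice length depends on $n$ and on $\log|\mathcal{I}(n)|$, the logarithm of the number of distinct inputs of length $n$. For \textsc{MBM} on a bipartite graph with $A=[n]$ and $B=[m]$ with $m=\Theta(n)$, an input is specified by an adversarial arrival permutation $\pi$ together with the edge set $E\subseteq A\times B$. The number of such inputs is at most $n!\cdot 2^{nm}=2^{\Order(n^2)}$, so $\log|\mathcal{I}(n)|=\Order(n^2)$ and hence $\log\log|\mathcal{I}(n)|=\Order(\log n)$.

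Substituting into the advice bound, the three terms are $\log n$, $2\log\log n$, and $\log\frac{\log|\mathcal{I}(n)|}{\log(1+\epsilon)} = \log\log|\mathcal{I}(n)| - \log\log(1+\epsilon) = \Order(\log n) + \Order_\epsilon(1)$. Treating $\epsilon$ as a fixed constant, the additive $-\log\log(1+\epsilon)$ term is an $\epsilon$-dependent constant, so the total advice length is dominated by $\Order(\log n)$. The resulting algorithm is therefore deterministic, uses $\Order(\log n)$ advice bits, and achieves competitive ratio $(1-\epsilon)(1-\tfrac{1}{e})$, as claimed.

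The step I expect to require the most care is verifying that the maximization-adapted version of the Böckenhauer et al.\ transformation genuinely yields the multiplicative guarantee $(1-\epsilon)(1-\tfrac1e)$ rather than something weaker, and confirming that the counting bound $\log|\mathcal{I}(n)|=\Order(n^2)$ correctly captures all legitimate inputs (both the graph and the arrival order). Once that bookkeeping is in place, the $\Order(\log n)$ bound follows immediately since $\log\log(\text{anything polynomial or subexponential in } n)$ is $\Order(\log n)$; the $\epsilon$-dependence is absorbed into the constant hidden by the $\Order$-notation.
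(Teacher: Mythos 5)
Your proposal is correct and takes essentially the same route as the paper: the paper's own proof (Appendix~\ref{appendix:boeckenhauer}) establishes exactly the maximization analogue of the B\"{o}ckenhauer et al.\ transformation that you invoke, and then instantiates it with \textsc{KVV}, where your counting $\log|\mathcal{I}(n)| = \Order(n^2)$ (hence advice $\log n + 2\log\log n + \Order(\log n) + \Order_\epsilon(1) = \Order(\log n)$) supplies the same bookkeeping. The only cosmetic difference is that the paper's adapted bound replaces $1+\epsilon$ by $\delta = \frac{1-E(n)+\epsilon E(n)}{E(n)}$ with $E(n) = 1-\frac{1}{e}$, which, as you anticipate, affects only an $\epsilon$-dependent constant inside the $\Order$-notation.
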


This result is complemented by a recent result of Mikkelsen~\cite{m15}, who showed that
for {\em repeatable problems} (see \cite{m15} for details) such as \textsc{MBM},
no deterministic online algorithm with advice sub-linear in $n$
has a substantially better competitive ratio than any randomized algorithm without advice.
Thus, using
$\Order(\log n)$ advice bits, a $(1-\epsilon)(1-\frac{1}{e})$-competitive deterministic algorithm can be obtained,
and no algorithm using $o(n)$ advice bits can substantially improve on this result. Furthermore,
Miyazaki~\cite{m14} showed that $\Theta(\log(n!)) = \Theta(n \log n)$ advice bits are necessary
and sufficient in order to compute a maximum matching.

\paragraph*{Our Results on Online Algorithms with Advice}
Consider a deterministic online algorithm with $f(n)$ bits of advice for \textsc{MBM}. Our previous exposition
of related works shows that the ranges $f(n) \in \Omega(\log n) \cap o(n)$ and $f(n) \in \Theta(n \log n)$
are well understood. In this work, we thus focus on the ranges
$f(n) \in o(\log n)$ and $f(n) \in \Omega(n) \cap o(n \log n)$. Our first set of results concerns
$(1-\epsilon)$-competitive deterministic advice algorithms. We show:
\begin{enumerate}
 \item There is a deterministic $(1-\epsilon)$-competitive online algorithm, using
 $\Order(\frac{1}{\epsilon^5} n)$ advice bits for \textsc{MBM}.

 \item Every deterministic $(1-\epsilon)$-competitive online algorithm for \textsc{MBM} uses
 $\Omega(\log (\frac{1}{\epsilon}) n)$ bits of advice.
\end{enumerate}
Our lower bound result is obtained by a reduction from the {\em string guessing game} of B\"{o}ckenhauer et
al.~\cite{bhkkss14}, a problem that is difficult even in the presence of a large number of advice bits.
This technique has repeatedly been applied for obtaining advice lower bounds, e.g.\ \cite{arrs13,GuptaKL13,BoyarKLL14,adkrr15,BoyarKLL16,BianchiBBKP16}.
Our algorithm simulates an augmenting-paths-based algorithm by Eggert et al. \cite{ekms11}, that has originally
been designed for the data streaming model, with the help of advice bits.
It is fundamentally
different to the $\KVV$ algorithm, however, inspired by the simplicity of $\KVV$, we are particularly interested in
the following class of algorithms:
\begin{definition}[\textsc{Ranking}-algorithm]
 An online algorithm \textbf{A} for \textsc{MBM} is called $\Rank$-algorithm if it follows the
 steps: (1) Determine a ranking $\sigma$; (2) Return $\Rank(\sigma)$.
\end{definition}
The \KVV~algorithm is a \Rank-algorithm, where in step (1), the permutation $\sigma$ is chosen uniformly
at random. The algorithm
described in Theorem~\ref{thm:bockenhauer} is a deterministic \Rank-algorithm with $\Order(\log n)$ bits
of advice that computes the permutation $\sigma$ from the available advice bits.
While we cannot answer the question how many advice bits are needed for deterministic online algorithms
in order to obtain a competitive ratio strictly larger than $\frac{1}{2}$ (and thus to improve on \textsc{Greedy}),
we make progress concerning \Rank~algorithms:
\begin{enumerate}
\setcounter{enumi}{2}
 \item Every $\Rank$-algorithm that chooses $\sigma$ from a set of at most
 $C \log \log n$ permutations, for a small constant $C$, has approximation factor
 at most $(\frac{1}{2} + \delta)$, for any $\delta > 0$.
\end{enumerate}
The previous result implies that every $(\frac{1}{2} + \delta)$-competitive deterministic online
\textsc{Ranking}-algorithm requires $\Omega(\log \log \log n)$ advice bits.

Next, since the computation of the advice and the algorithm of Theorem~\ref{thm:bockenhauer} are not efficient,
we are interested in fast and simple \textsc{Ranking} algorithms. We identify a subclass of \textsc{Ranking} algorithms,
denoted \textsc{Category} algorithms, that leads to interesting results, both as deterministic algorithms with advice and
randomized algorithms without advice.

\begin{definition}[\textsc{Category}-algorithm]
 A $\Rank$-algorithm \textbf{A} is called a \textsc{Category}-algorithm if it follows the steps:
 \begin{itemize}
  \item Determine a category function $c: B \rightarrow \{1, 2, 3, \dots, 2^k\}$ for some integer $k \ge 1$ with $2^k < m$;
  \item Let $\sigma_c: [m] \rightarrow [m]$ be the unique permutation of the
$B$-vertices such that for two vertices $b_1, b_2 \in B: $ $\sigma_c(b_1) < \sigma_c(b_2)$ if
and only if $c(b_1) < c(b_2)$ or $(c(b_1) = c(b_2) \mbox{ and } b_1 < b_2)$.
 \item Return $\Rank(\sigma_c)$.
 \end{itemize}
\end{definition}
Categories can be seen as coarsened versions of rankings, where multiple items with adjacent ranks are
grouped into the same category and within a category, the natural ordering by vertex identifier is used.
We prove the following:
\begin{enumerate}
\setcounter{enumi}{3}\item
 There is a deterministic $\frac{3}{5}$-competitive online \textsc{Category}-algorithm,
 using $m$ bits of advice (and thus two categories).
\end{enumerate}
The oracle determines the categories depending on whether a $B$-vertex would be matched by a run of \textsc{Greedy}.
We believe that this type of advice is particularly interesting since it does not require the oracle to compute an
optimal solution.

\paragraph*{Our Results on Randomized Algorithms}
Last, we consider randomized algorithms with limited access to random bits. The \KVV-algorithm selects a
permutation $\sigma$ uniformly at random, and, since there are $m!$ potential permutations,
$\log(m!) = \Theta(m \log m)$ random bits are required in order to obtain a uniform choice.
We are interested in randomized algorithms that employ fewer random bits. We
consider the class of randomized \textsc{Category}-algorithms, where the categories of the $B$-vertices
are chosen uniformly at random. We show:
\begin{enumerate}
\setcounter{enumi}{4} \item
 There is a randomized \textsc{Category}-algorithm using $k m$ random bits
 with approximation factor $1 - \left( \frac{2^k}{2^k + 1}\right)^{2k}$, for any integer $k \ge 1$.
\end{enumerate}
For $k = 1$, the competitive ratio evaluates to $5/9$. It approaches $1-1/e$ very quickly, for example,
for $k = 10$ the absolute difference between the competitive ratio and $1-1/e$ is less than
$0.0002$.
Our analysis is based on the analysis of the \textsc{KVV} algorithm by Birnbaum and Mathieu \cite{bm08}
and uses a result by Konrad et al.~\cite{kmm12}
concerning the performance of the \textsc{Greedy}
algorithm on a randomly sampled subgraph which was originally developed in the context of streaming algorithms.

The results as described above are summarized in Table~\ref{tab:results}.

\begin{table}[ht]
\small
\begin{center}
 \begin{tabularx}{\textwidth}{|llX|}
\hline
Deterministic ratio & \# of advice bits & Description and Authors    \\
\hline
$1$ & $\Theta(n \log n)$ & (Miyazaki \cite{m14}) \\
$1-\epsilon$ & $\Order(\frac{1}{\epsilon^5} n)$ & Application of Eggert et al. \cite{ekms11} (here) \\
$1-\epsilon$ & $\Omega(\log(\frac{1}{\epsilon})n)$ & LB holds for any online algorithm (here) \\
$1-\frac{1}{e} + \epsilon$ & $\Omega(n)$ & LB holds for any online algorithm (Mikkelsen \cite{m15}) \\
$1-\frac{1}{e}$ &  $\Order(\log n)$ & Exp. time \textsc{Ranking}-alg. (Böckenhauer et al. \cite{bkkk11}) \\
$\frac{3}{5}$ & $m$ & \textsc{Category}-algorithm using two categories (here) \\
$\frac{1}{2}+\epsilon$ & $\Omega(\log \log \log n)$ &  LB holds for \textsc{Ranking}-algorithms (here) \\
\hline
Randomized ratio & \# of random bits & Description and Authors    \\
\hline
$1 - \frac{1}{e}$ & $m \log m$ & \textsc{KVV} algorithm (Karp, Vazirani, Vazirani \cite{kvv90}) \\
$1 - \left( \frac{2^k}{2^k + 1}\right)^{2k}$ & $k m$ &   \textsc{Category}-algorithm using $2^k$ categories (here)\\
\hline
\end{tabularx}
\caption{Overview of our results, sorted with decreasing competitiveness. \label{tab:results}}
\end{center}
\end{table}

\paragraph*{Models for Online Algorithms with Advice}
The two main models for online computation with advice are the per-request model of Emek
et at.~\cite{efkr11} and the tape model of B\"{o}ckenhauer et al.~\cite{bkkkm09}.
Both models were inspired by the original model proposed by Dobrev et al.~\cite{DRP2008}.
In the model of Emek et at.~\cite{efkr11}, a bit string of a fixed length is received by the
algorithm with each request for a total amount of advice that is at least linear in the size of
the input. For this work, we use the tape model of B\"{o}ckenhauer et al.~\cite{bkkkm09}, where the
algorithm has access to an infinite advice string that it can access at any time (see
Section~\ref{sec:prelim} for a formal definition), allowing for advice that is sub-linear in
the size of the input. Many online problems have been studied in the setting of online algorithms
with advice (e.g. metrical task system \cite{efkr11}, $k$-server
problem \cite{efkr11,bkkk11,RenaultR15,GuptaKL13}, paging \cite{DRP2008,bkkkm09}, bin packing
problem \cite{RenaultRS15,BoyarKLL16,adkrr15}, knapsack problem \cite{BockKKR14}, reordering buffer management
problem \cite{arrs13}, list update problem \cite{BoyarKLL14}, minimum spanning tree
problem \cite{BianchiBBKP16} and others). Interestingly, a variant of the algorithm with advice for
list update problem of \cite{BoyarKLL14} was used to gain significant improvements in the compression
rates for Burrows-Wheeler transform compression schemes \cite{KamaliL14}.
The information-theoretic lower bound techniques for online algorithms with advice proposed by
Emek et al.~\cite{efkr11} applies to randomized algorithms and uses a reduction to a matching
pennies game (essentially equivalent to the string guessing game). The reduction technique using the string
guessing game of B\"{o}ckenhauer et al.~\cite{bhkkss14} is a refinement specifically for deterministic
algorithms of the techniques of Emek et al.

\paragraph*{Outline} Preliminaries are discussed in Section~\ref{sec:prelim}. Our
$(1-\epsilon)$-competitive algorithm and a related advice lower bound are presented in
Section~\ref{sec:one-minus-eps}. Then, in Section~\ref{sec:lb-rank}, we give the advice lower
bound for $(\frac{1}{2} + \epsilon)$-competitive $\Rank$-algorithms. Last, in Section~\ref{sec:cat-algos}, we consider our randomized \textsc{Category} algorithm
and our $\frac{3}{5}$-competitive advice \textsc{Category} algorithm.

\section{Preliminaries}\label{sec:prelim}
Unless stated otherwise, we consider a bipartite input graph $G=(A, B, E)$ with $A = [n]$ and
$B = [m]$, for integers $m,n$ such that $m = \Theta(n)$.
The neighbourhood of a vertex $v$ in graph $G$ is denoted by $\Gamma_G(v)$.
Let $M$ be a matching in $G$. We denote the set of vertices matched in $M$ by $V(M)$. For a vertex $v \in V(M)$,
$M(v)$ denotes the vertex that is matched to $v$ in $M$. Generally, we write $M^*$ to denote a maximum
matching, i.e., a matching of largest cardinality. For $A' \subseteq A, B' \subseteq B$,
$opt(A', B')$ denotes the size of a maximum matching in $G[A' \cup B']$, the subgraph induced by $A' \cup B'$.

\paragraph*{The \Rank~Algorithm}
Given permutations $\pi: [n] \rightarrow [n]$
and $\sigma: [m] \rightarrow [m]$, we write $\Rank(G, \pi, \sigma)$ to denote the output matching of the
$\Rank$ algorithm when the $A$-vertices arrive in the order given by $\pi$, and the $B$-vertices are ranked
according to $\sigma$. We may write $\Rank(\sigma)$ to denote $\Rank(G, \pi, \sigma)$ if $\pi$ and $G$
are clear from the context.

\paragraph*{The \textsc{Greedy} Matching Algorithm} $\Greedy$ processes the edges
of a graph in arbitrary order and inserts the current edge $e$ into an initially empty matching $M$ if
$M \cup \{e\}$ is a matching. It computes a maximal matching which is of size at least $\frac{1}{2} |M^*|$.

\paragraph*{Category Algorithms} For an integer $k$, let $c: [m] \rightarrow \{1, \dots, 2^k\}$ be an assignment
of categories to the $B$-vertices. Then let $\sigma_c: [m] \rightarrow [m]$ be the unique permutation of the
$B$-vertices such that for two vertices $b_1, b_2 \in B: $ $\sigma_c(b_1) < \sigma_c(b_2)$ if
and only if $c(b_1) < c(b_2)$ or $(c(b_1) = c(b_2) \mbox{ and } b_1 < b_2)$. The previous definition of $\sigma_c$
is based on the natural ordering of the $B$-vertices. This gives a certain stability to the resulting permutation,
since changing the category of a single vertex $b$ does not affect the relative order of the vertices
$B \setminus \{b \}$.

\paragraph*{The Tape Advice Model} For a given request sequence $I$ of length $n$ for a maximization problem,
an \emph{online algorithm with advice} in the \emph{tape advice model} computes the output sequence
$\alg(I,\Phi) = \left<y_1,y_2,\ldots,y_n\right>$, where $y_i$ is a function of the requests from $1$ to
$i$ of $I$ and the infinite binary advice string $\Phi$. Algorithm $\alg$ has an advice complexity of $b(n)$ if,
for all $n$ and any input sequence of length $n$, $\alg$ reads no more than $b(n)$ bits from $\Phi$.

\section{Deterministic $\mathbf{(1-\epsilon)}$-competitive Advice Algorithms} \label{sec:one-minus-eps}

\subsection{Algorithm With $\mathbf{\Order(\frac{1}{\epsilon^5} n)}$ Bits of Advice}

The main idea of our online algorithm is the simulation of an augmenting-paths-based algorithm
with the help of advice bits. We employ the deterministic algorithm of Eggert et al.~\cite{ekms11} that has
been designed for the data streaming model. It computes a $(1-\epsilon)$-approximate
matching, using $\Order(\frac{1}{\epsilon^5})$ passes over the edges of the input graph, where each pass $i$
is used to compute a matching $M_i$ in a subgraph $G_i = G[A_i \cup B_i]$, for
some subsets $A_i \subseteq A$ and $B_i \subseteq B$, using the \textsc{Greedy}
matching algorithm.
In the first pass, $M_1$ is computed in $G$ and thus constitutes a $\frac{1}{2}$-approximation.
Let $M = M_1$.
Then, $\Order(\frac{1}{\epsilon^2})$ phases follow, where in each phase, a set of disjoint augmenting
paths is computed using $\Order(\frac{1}{\epsilon^3})$ applications of the \textsc{Greedy} matching
algorithm (and thus $\Order(\frac{1}{\epsilon^3})$ passes per phase).
At the end of a phase, $M$ is augmented using the augmenting-paths found in this phase. Upon
termination of the algorithm, $M$ constitutes a $(1-\epsilon)$-approximation (see \cite{ekms11} for
the analysis).

The important property that allows us to translate this algorithm into an online algorithm with advice
is the simple observation that the computed matching $M$ is a subset of $\bigcup_i M_i$. For every $i$,
we encode the vertices
$A_i \subseteq A$ and $B_i \subseteq B$ that constitute the vertices of $G_i$ using $n+m$ advice bits.
Furthermore, for every vertex $a \in A$, we also encode the index $j(a)$
of the matching $M_{j(a)}$ that contains the edge that is incident to $a$ in the final matching $M$
(if $a$ is not matched in $M$, then we set $j(a)=0$). Last, using $\Order(\log n)$ bits, we encode the integers
$n$ and $m$, using a self-delimited encoding. Parameters $n,m$ are required in order to determine the word size
that allows the storage of the indices $j(a)$, and to determine the subgraphs $G_i$. The total number of
advice bits is hence $\Order(\frac{1}{\epsilon^5} (n+m) + \log (\frac{1}{\epsilon^5}) (m) + \log(n)) = \Order(\frac{1}{\epsilon^5} n)$.

After having read the advice bits, our online algorithm computes the $\Order(\frac{1}{\epsilon^5})$ \textsc{Greedy}
matchings $M_i$ simultaneously in the background while receiving the requests. Upon arrival of an
$a \in A$, we match it to the $b \in B$ such that $ab \in M_{j(a)}$ incident to $a$ if $j(a) \ge 1$,
and we leave it unmatched if $j(a) = 0$. We thus obtain the following theorem:

\begin{theorem} \label{thm:ub-advice}
 For every $\epsilon > 0$, there is a $(1-\epsilon)$-competitive deterministic online algorithm for \textsc{MBM}
 that uses $\Order(\frac{1}{\epsilon^5} n)$ bits of advice.
\end{theorem}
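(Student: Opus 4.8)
The plan is to formalize the construction sketched above and to verify its two requirements separately: the advice budget and the correctness of the online simulation. I would begin by isolating the one structural fact from Eggert et al.~\cite{ekms11} on which everything rests: their algorithm produces a sequence of subgraphs $G_1,\dots,G_t$ with $t=\Order(\frac{1}{\epsilon^5})$, runs \Greedy\ on each $G_i=G[A_i\cup B_i]$ to obtain a matching $M_i$, and outputs a matching $M\subseteq\bigcup_{i=1}^t M_i$ with $|M|\ge(1-\epsilon)|M^*|$. The key point I would stress is that each $M_i$ is itself a \Greedy\ matching, hence is completely determined by the vertex set $A_i\cup B_i$ together with a fixed deterministic processing order of the $A$-vertices and a fixed tie-breaking rule (say, matching each arriving $a$ to its lowest-identifier free neighbour in $B_i$). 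I would have the oracle run every \Greedy\ computation in the online arrival order $\pi$, so that the online algorithm can reproduce each $M_i$ incrementally.

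For the advice budget I would encode, for every pass $i\in[t]$, the characteristic vectors of $A_i\subseteq A$ and $B_i\subseteq B$, costing $n+m$ bits per pass and $\Order(\frac{1}{\epsilon^5}(n+m))$ bits in total. For each $a\in A$ I would store the index $j(a)\in\{0,1,\dots,t\}$ of the matching supplying $a$'s edge in $M$ (with $j(a)=0$ when $a$ is unmatched in $M$), which costs $\Order(\log\frac{1}{\epsilon})$ bits per vertex and $\Order(\log(\frac{1}{\epsilon})\,n)$ bits overall. A self-delimited encoding of $n$ and $m$ adds $\Order(\log n)$ bits, which the algorithm needs in order to fix the word length of the indices $j(a)$ and to delimit the characteristic vectors. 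Summing the three contributions and using $m=\Theta(n)$ gives the claimed $\Order(\frac{1}{\epsilon^5}n)$.

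For correctness I would argue that, after reading the advice, the online algorithm maintains all $t$ partial \Greedy\ matchings simultaneously: when $a\in A$ arrives, it is fed into the \Greedy\ computation of every $G_i$ with $a\in A_i$, using exactly the rule the oracle used, so the partial matchings coincide with the oracle's $M_i$ at all times. In particular the edge of $M_{j(a)}$ incident to $a$ is known at the very moment $a$ is processed, so the algorithm can irrevocably commit $a$ to it (or leave it unmatched if $j(a)=0$). The committed edges are precisely the edges of $M$; since $M$ is a matching, no two committed edges share a $B$-endpoint, so the designated $B$-vertex is free at every commitment and hence each commitment is feasible online. The output is therefore exactly $M$, and the bound of~\cite{ekms11} gives $|M|\ge(1-\epsilon)|M^*|$.

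The step I expect to require the most care is this last consistency argument: that running all the \Greedy\ subroutines incrementally in the adversarial arrival order $\pi$ genuinely reproduces the oracle's matchings $M_i$, and that every designated edge is feasible at commit time. Both points reduce to two observations I would make precise — the oracle and the algorithm use the identical order $\pi$ and tie-breaking rule for each \Greedy\ pass, and $M$ is a valid matching so its edges never collide on the $B$-side. I would therefore pin down the definition of the per-subgraph \Greedy\ run so that the oracle's offline simulation and the algorithm's online execution are provably the same computation.
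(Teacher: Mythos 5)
Your proposal is correct and takes essentially the same route as the paper's proof: it encodes the vertex sets $A_i, B_i$ of the $\Order(\frac{1}{\epsilon^5})$ \textsc{Greedy} passes of the Eggert et al.\ algorithm, the per-vertex indices $j(a)$ into the matchings $M_i \supseteq M$, and a self-delimited encoding of $n$ and $m$, and then re-runs all \textsc{Greedy} matchings incrementally online, committing each arriving $a$ to its edge in $M_{j(a)}$. Your explicit verification that the oracle and the algorithm perform the identical computation (same arrival order $\pi$, fixed tie-breaking) and that commitments never collide on the $B$-side (since $M$ is a matching) merely makes rigorous what the paper leaves implicit.
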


\subsection{$\mathbf{\Omega(\log(\frac{1}{\epsilon})n)}$ Advice Lower Bound}
We complement the advice algorithm of the previous section with an $\Omega(\log (\frac{1}{\epsilon}) n)$
advice lower bound for $(1-\epsilon)$-competitive deterministic advice algorithms.
To show this, we make use of the lower bound techniques
of \cite{bhkkss14} using the {\em string guessing game}, which is defined as follows.

\begin{definition}{$\qStr$~\textup{\cite{bhkkss14}}.} The {\em string guessing problem with known history}
over an alphabet $\Sigma$ of size $q \ge 2$ ($\qStr$) is an online minimization problem. The input consists
of $n$ and a request sequence $\sigma = r_1,\ldots,r_n$ of the characters, in order, of an $n$ length string.
An online algorithm $A$ outputs a sequence $a_1,\ldots,a_n$ such that $a_i = f_i(n,r_1,\ldots,r_{i-1}) \in \Sigma$
for some computable function $f_i$.  An important aspect of this problem is that the algorithm needs to produce its output character \emph{before} the corresponding request: request $r_i$ is revealed immediately after the algorithm outputs $a_i$. The cost of $A$ is the Hamming distance between $a_1,\ldots,a_n$ and $r_1,\ldots,r_n$.
\end{definition}

In \cite{bhkkss14}, the following lower bound on the number of advice bits is shown for $\qStr$.

\begin{theorem}{\textup{\cite{bhkkss14}}}\label{thm:BockThm}
  Consider an input string of length $n$ for $\qStr$. The minimum number of advice bits for any
  deterministic online algorithm that is correct for more than $\alpha n$ characters,
  for $\frac{1}{q} \le \alpha < 1$, is
  $((1 - H_q(1-\alpha))\log_2 q) n$, where $H_q(p) = p\log_q(q-1)-p\log_qp-(1-p)\log_q(1-p)$ is the
  $q$-ary entropy function.
\end{theorem}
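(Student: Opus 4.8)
The plan is to prove this via an incompressibility (counting) argument over the $q^n$ possible input strings, combined with a volume bound on $q$-ary Hamming balls. First I would fix an arbitrary advice string $\phi$ and exploit the \emph{known-history} structure of $\qStr$: once $\phi$ is fixed and the algorithm is deterministic, its guess $a_i$ at step $i$ is a function of $n$, $\phi$, and the true prefix $r_1,\ldots,r_{i-1}$ alone. This is the property that makes the argument work, since the algorithm's state is determined by the actual revealed characters rather than by its own past (possibly wrong) answers.

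The conceptual heart is an encoding lemma. Let $S_\phi$ denote the set of input strings on which the algorithm equipped with $\phi$ errs in at most $E$ positions. I claim $|S_\phi| \le \sum_{j=0}^{E} \binom{n}{j}(q-1)^j$. To prove this I would exhibit, for each string in $S_\phi$, a description consisting of (i) the set $P \subseteq [n]$ of positions where the guess is wrong, with $|P| \le E$, and (ii) for each $p \in P$, the true symbol $r_p$, one of the $q-1$ symbols other than the guess $a_p$. Every string in $S_\phi$ is uniquely recoverable from this description by a left-to-right reconstruction: having reconstructed $r_1,\ldots,r_{i-1}$, one recomputes $a_i = f_i(n,r_1,\ldots,r_{i-1},\phi)$, then sets $r_i = a_i$ if $i \notin P$ and $r_i$ equal to the recorded symbol otherwise. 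Counting the $\binom{n}{\le E}$ choices for $P$ and the $(q-1)$ choices per error position yields the stated bound on $|S_\phi|$.

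Next I would combine strategies and invoke the entropy bound. Being correct on more than $\alpha n$ characters means erring in at most $E$ positions, where $E$ is the largest integer strictly less than $(1-\alpha)n$; in particular $E \le (1-\alpha)n$. If the algorithm reads at most $b$ advice bits, it can follow at most $2^b$ distinct advice strings, and since every input must be handled correctly for some advice, the sets $S_\phi$ cover all of $\Sigma^n$. Hence $q^n \le 2^b \cdot \max_\phi |S_\phi| \le 2^b \sum_{j=0}^{E} \binom{n}{j}(q-1)^j$, which rearranges to $b \ge n\log_2 q - \log_2 \sum_{j=0}^{E}\binom{n}{j}(q-1)^j$. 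Applying the standard volume bound $\sum_{j=0}^{pn}\binom{n}{j}(q-1)^j \le q^{\,n H_q(p)}$ with $p = 1-\alpha$, and using $E \le (1-\alpha)n$, gives $b \ge n\log_2 q - n H_q(1-\alpha)\log_2 q = n\bigl(1 - H_q(1-\alpha)\bigr)\log_2 q$, as claimed.

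I expect the main obstacle to be two points of care rather than any deep difficulty: first, the correct handling of the strict inequality ``more than $\alpha n$'' together with the floors and ceilings hidden in $E$, so that one legitimately bounds $E \le (1-\alpha)n$ when passing to the entropy estimate; and second, the validity of the Hamming-ball volume bound, which requires the radius parameter $p = 1-\alpha$ to satisfy $p \le 1 - 1/q$. This is precisely where the hypothesis $\alpha \ge 1/q$ is used, since it guarantees $1-\alpha \le (q-1)/q$ and hence that $q^{nH_q(1-\alpha)}$ is the right (and meaningful) upper bound on the ball volume. The encoding lemma itself, though the core of the argument, is clean exactly because the known-history setting makes each guess a function of the true prefix.
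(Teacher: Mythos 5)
Your proposal is correct. Note, however, that the paper does not prove this statement at all: it is imported verbatim from B\"{o}ckenhauer et al.\ \cite{bhkkss14} and used as a black box in the proof of Theorem~\ref{thm:lbEps}, so there is no in-paper proof to compare against. Your incompressibility argument is essentially the standard proof from the cited source: the known-history structure makes each guess $a_i$ a function of the true prefix, so the strings on which a fixed advice string incurs at most $E$ errors form a set of size at most the $q$-ary Hamming ball volume $\sum_{j=0}^{E}\binom{n}{j}(q-1)^j$ (your left-to-right reconstruction is exactly the right injectivity argument), and covering all $q^n$ inputs by $2^b$ such sets plus the volume bound $\sum_{j\le pn}\binom{n}{j}(q-1)^j \le q^{nH_q(p)}$ (valid since $\alpha \ge 1/q$ gives $p = 1-\alpha \le 1-1/q$) yields $b \ge (1-H_q(1-\alpha))n\log_2 q$. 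One small caveat: the theorem as stated says ``the minimum number \ldots is,'' which formally also asserts a matching upper bound (proved separately in \cite{bhkkss14}); you establish only the necessity direction, but that is the only direction this paper invokes, so your argument suffices for its role here.
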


First, we define a sub-graph that is used in the construction of the lower bound sequence.

\begin{definition}
 A bipartite graph is {\em $c$-semi complete}, if it is isomorphic to
 $G = (A, B, E)$ with $A = \{a_1, \dots, a_c\}, B = \{b_1, \dots, b_c\},$ and
 $E = \{ a_i, b_j \, : \, j \ge i \}$.
\end{definition}

The following lemma presents the reduction from $\qStr$ to \textsc{MBM}.

\begin{lemma}\label{lem:reduceMat}
  For an integer $c \ge 3$, suppose that there is a deterministic $\rho$-competitive online algorithm for
  \textsc{MBM}, using $bn$ bits of advice, where $1 - \frac{1}{c} + \frac{1}{c!} \le \rho < 1$.
  Then, there exists a deterministic algorithm for $\cStr$, using $cbn$ bits of advice, that is correct
  for at least $(1 - (1 - \rho)c)n$ characters of the $n$-length string.
\end{lemma}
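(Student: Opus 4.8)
The plan is to reduce $\cStr$ to \textsc{MBM} by encoding each character of the string as its own disjoint $c$-semi-complete gadget. Identify the alphabet of size $c!$ with the symmetric group $S_c$ and write the $i$-th requested character as a permutation $\pi_i \in S_c$. Gadget $i$ has $A$-vertices $a^{(i)}_1,\dots,a^{(i)}_c$ arriving in this order and $B$-vertices $b^{(i)}_1,\dots,b^{(i)}_c$, with $a^{(i)}_\ell$ adjacent to exactly $\{\,b^{(i)}_{\pi_i(m)} : m \ge \ell\,\}$; this is $c$-semi-complete, and its \emph{unique} perfect matching is $a^{(i)}_\ell \mapsto b^{(i)}_{\pi_i(\ell)}$. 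Since the $n$ gadgets are vertex-disjoint and each admits a perfect matching of size $c$, we have $\opt = cn$, so the assumed $\rho$-competitive $\ALG$ produces a matching missing at most $(1-\rho)cn$ edges on the real instance. The string-guessing algorithm $A'$ will read off one permutation guess $\hat\pi_i$ per gadget and output it; the entire argument rests on the linkage that if $\hat\pi_i \neq \pi_i$, then gadget $i$ leaves at least one $B$-vertex unmatched.

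The structural fact I would exploit is that in gadget $i$ the only neighbours of $b^{(i)}_{\pi_i(j)}$ are $a^{(i)}_1,\dots,a^{(i)}_j$. To extract $\hat\pi_i$, I would have $A'$ simulate $\ALG$ while feeding it neighbourhoods consistent with the guess it is assembling: $a^{(i)}_1$ is presented with all of $b^{(i)}_1,\dots,b^{(i)}_c$ (which is independent of $\pi_i$), $\ALG$ matches it to some $b^{(i)}_{\hat\pi_i(1)}$, and thereafter $a^{(i)}_\ell$ is presented with neighbourhood $\{\,b^{(i)}_s : s \notin \{\hat\pi_i(1),\dots,\hat\pi_i(\ell-1)\}\,\}$, yielding the next coordinate $\hat\pi_i(\ell)$ (if $\ALG$ ever leaves an $a^{(i)}_\ell$ unmatched here, $A'$ completes $\hat\pi_i$ arbitrarily). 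Crucially this simulation depends only on the advice string and on the already-revealed characters $r_1,\dots,r_{i-1}$ (used to build gadgets $1,\dots,i-1$), so $\hat\pi_i$ is a legal output for request $i$ of $\cStr$, produced before $r_i$ is disclosed.

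The main obstacle, and the step I would treat most carefully, is reconciling this guess-consistent simulation with the real instance, whose gadget $i$ uses the true neighbourhoods determined by $\pi_i$. The claim to prove is that $\ALG$'s actual run agrees with the simulation up to the first coordinate at which $\hat\pi_i$ and $\pi_i$ differ. Indeed, gadgets $1,\dots,i-1$ are identical in both runs (they use the true $\pi_1,\dots,\pi_{i-1}$), the neighbourhood of $a^{(i)}_1$ is $\pi_i$-independent, and as long as the prefixes of $\hat\pi_i$ and $\pi_i$ agree the presented neighbourhoods agree, so $\ALG$ makes identical choices. Let $j$ be the first index with $\hat\pi_i(j) \neq \pi_i(j)$; at $a^{(i)}_j$ the true and simulated neighbourhoods still coincide, so in the real run $\ALG$ matches $a^{(i)}_j$ to $b^{(i)}_{\hat\pi_i(j)} \neq b^{(i)}_{\pi_i(j)}$. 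Since $a^{(i)}_1,\dots,a^{(i)}_{j-1}$ went to $b^{(i)}_{\pi_i(1)},\dots,b^{(i)}_{\pi_i(j-1)}$ and the later vertices $a^{(i)}_{j+1},\dots,a^{(i)}_c$ are not adjacent to $b^{(i)}_{\pi_i(j)}$, the vertex $b^{(i)}_{\pi_i(j)}$ can never be matched, giving the promised loss whenever the guess is wrong.

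Finishing is then a counting argument. Each incorrect guess forces at least one missing edge inside its own gadget, and gadgets are disjoint, so the number of incorrect characters is at most the total number of missed edges, which is at most $(1-\rho)cn$. Hence $A'$ is correct on at least $n - (1-\rho)cn = (1-(1-\rho)c)n$ of the $n$ characters. The constructed instance has $cn$ $A$-vertices, so $A'$ simply reuses $\ALG$'s advice, namely $cbn$ bits. Finally, the hypotheses $c \ge 3$ and $1 - \tfrac1c + \tfrac1{c!} \le \rho < 1$ are exactly what make the resulting correct fraction lie in $[\tfrac1{c!},1)$, which is the range in which Theorem~\ref{thm:BockThm} applies when this reduction is later combined to obtain the advice lower bound for \textsc{MBM}.
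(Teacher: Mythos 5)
Your proposal is correct and follows essentially the same route as the paper: one $c$-semi-complete gadget per character with $B$-indices permuted by the character's permutation, an adaptive simulation that presents each arriving $A$-vertex with the currently unmatched $B$-vertices so the guess is extracted before the character is revealed, and the counting $\nu \le (1-\rho)cn$ of imperfect gadgets. If anything, you are slightly more careful than the paper, which asserts without detail that the simulated run yields a perfect matching and agrees with the real run; your explicit prefix-agreement argument and your handling of the case where the algorithm declines to match an $A$-vertex fill in exactly the steps the paper leaves implicit.
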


\begin{proof}
  Let $\alg_\mat$ be a deterministic $\rho$-competitive online algorithm for \textsc{MBM},
  using $bn$ bits of advice, with $1 - \frac{1}{c} + \frac{1}{c!} \le \rho$, for an
  integer $c \ge 3$. We will present an algorithm $\alg_\cStr$ that, in an online manner,
  will generate a request sequence $I_\mat$ based on its input, $I$ (of length $n$), that can be processed
  by $\alg_\mat$. Further, the advice received by $\alg_\cStr$ will be the advice that
  $\alg_\mat$ requires for $I_\mat$. As shown below, the length of $I_\mat$ is $cn$, hence
  $\alg_\cStr$ requires $cbn$ bits of advice. The solution produced by $\alg_\mat$ on
  $I_\mat$ will define the output produced by $\alg_\cStr$.

  Suppose first that the entire input sequence $I$ is known in advance (we will argue later how to get around
  this assumption). Let $\Pi$ be an enumeration of all the permutations of length $c$,
and let $g:\Sigma \to \{1,\ldots,c!\}$ be a bijection between $\Sigma$, the alphabet of the $\cStr$ problem,
and an index of a permutation in $\Pi$. The request sequence $I_\mat$ has a length of $cn$,
consisting of $n$ distinct $c$-semi-complete graphs, where each graph is based on a request of $I$.
That is, for each request $r_i$ in $I$, we append $c$ requests to $I_\mat$ that correspond to
the $A$-vertices of a $c$-semi-complete graph, where the indices of the $B$-vertices
are permuted according to the permutation $\Pi[g(r_i)]$.

Since $I$ is not known in advance,
we must construct $I_\mat$ in an online manner while predicting the requests $r_j$. For each
request $r_j$, the procedure is as follows:

Let $I^{j-1}_\mat$ be the $c(j-1)$-length prefix of $I_\mat$. Note that when predicting
request $r_j$, requests $r_1,\ldots,r_{j-1}$ have already been revealed, and
$I^{j-1}_\mat$ can thus be constructed.
The algorithm $\alg_\cStr$ simulates $\alg_\mat$ on $I^{j-1}_\mat$ followed by
another $c$-semi-complete graph $G_j = (A_j,B_j,E_j)$ such that, for $1 \le k \le c$,
when vertex $a_k \in A_j$ is revealed, the $B$-vertices incident to $a_k$ correspond exactly
to the unmatched $B$-vertices of $B_j$ in the current matching of $\alg_\mat$. By construction,
$\alg_\mat$ computes a perfect matching in $G_j$.
The computed perfect matching corresponds to a permutation $\pi$ at some index $z$ of $\Pi$, and algorithm
$\alg_\cStr$ outputs $g^{-1}(z)$ as a prediction for $r_j$.

Consider a run of $\alg_\mat$ on $I_\mat$. If $\alg_\mat$ computes
a perfect matching on the $j$th semi-complete graph, then our algorithm predicted $r_j$ correctly.
Similarly, if this matching is not perfect, then
our algorithm failed to predict $r_j$. Let $\nu$ be the total number of imperfect matchings,
let $\alg_\mat(I_\mat)$ denote the matching computed by $\alg_\mat$ on $I_\mat$, and let
$\opt(I_\mat)$ denote a perfect matching in the graph given by $I_\mat$. Then:
\begin{align*}
  |\alg_\mat(I_\mat)| \le |\opt(I_\mat)| - \nu \iff \nu \le |\opt(I_\mat)| - \rho\cdot|\opt(I_\mat)| = (1-\rho)cn ~.
\end{align*}
\end{proof}

We prove now the main lower bound result of this section.
\begin{theorem}\label{thm:lbEps}
  For an integer $c \ge 3$, any deterministic online algorithm with advice for \textsc{MBM}
  requires at least $\left(\frac{(1 - H_q(1-\alpha))}{2}\log c\right)n$ bits of advice to be
  $\rho$-competitive for $1 - \frac{1}{c} + \frac{1}{c!} \le \rho < 1$, where $H_q$ is the $q$-ary
  entropy function and $\alpha = 1 - (1 - \rho)c$.
\end{theorem}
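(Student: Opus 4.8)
The plan is to combine the reduction of Lemma~\ref{lem:reduceMat} with the string-guessing lower bound of Theorem~\ref{thm:BockThm}, instantiated at $q = c!$. First I would observe that Lemma~\ref{lem:reduceMat} takes any deterministic $\rho$-competitive online algorithm for \textsc{MBM} using $bn$ bits of advice and produces a deterministic algorithm for $\cStr$ using $cbn$ bits of advice that is correct on at least $(1 - (1-\rho)c)n = \alpha n$ characters, where $\alpha = 1 - (1-\rho)c$. Since $\cStr$ is exactly $\qStr$ with $q = c!$, Theorem~\ref{thm:BockThm} lower-bounds the advice needed by any $\cStr$-algorithm correct on more than $\alpha n$ characters by $\bigl((1 - H_q(1-\alpha))\log_2 q\bigr)n$, with $q = c!$.

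The heart of the argument is then a chain of inequalities on the advice count. The derived $\cStr$ algorithm uses $cbn$ advice bits and is correct on $\alpha n$ characters, so Theorem~\ref{thm:BockThm} forces
\[
  cbn \;\ge\; \bigl(1 - H_{c!}(1-\alpha)\bigr)\,(\log_2 c!)\,n.
\]
Dividing by $c$ gives a lower bound on $bn$, the advice used by the original matching algorithm:
\[
  bn \;\ge\; \frac{\bigl(1 - H_{c!}(1-\alpha)\bigr)\,(\log_2 c!)}{c}\, n.
\]
The final step is to simplify $\tfrac{\log_2 c!}{c}$ from below. Here I would use the elementary bound $c! \ge (c/2)^{c/2}$ (valid for $c \ge 2$, coming from the fact that at least half the factors in $c!$ exceed $c/2$), which yields $\log_2 c! \ge \tfrac{c}{2}\log_2(c/2) \ge \tfrac{c}{2}\bigl(\log_2 c - 1\bigr)$, so $\tfrac{\log_2 c!}{c} \ge \tfrac{1}{2}\bigl(\log_2 c - 1\bigr)$; up to the harmless $-1$ (absorbable for $c \ge 3$ into the stated $\tfrac12 \log c$), this gives $\tfrac{\log_2 c!}{c} \ge \tfrac12 \log c$ in the asymptotic form claimed. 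Writing the $q$-ary entropy at $q = c!$ as $H_q$, the bound becomes $bn \ge \bigl(\tfrac{(1 - H_q(1-\alpha))}{2}\log c\bigr)n$, which is precisely the statement of Theorem~\ref{thm:lbEps}.

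The main obstacle I anticipate is bookkeeping the factor-of-$c$ advice blow-up together with the entropy-and-factorial simplification so that the two effects compose cleanly rather than cancelling. In particular one must check that the condition $\tfrac1q \le \alpha < 1$ required by Theorem~\ref{thm:BockThm} genuinely holds for the relevant range: the hypothesis $\rho \ge 1 - \tfrac1c + \tfrac{1}{c!}$ is engineered exactly so that $\alpha = 1 - (1-\rho)c \ge 1 - c\cdot\tfrac{c-1}{c\,\cdot c!}\cdot\!$-type estimate stays above $\tfrac{1}{c!} = \tfrac1q$, and $\rho < 1$ keeps $\alpha < 1$; verifying this compatibility is the one place where the precise threshold on $\rho$ matters, and I would present that check explicitly before invoking the entropy bound. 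The remaining manipulations — dividing through by $c$ and bounding $\log_2 c!$ — are routine, so I would state them compactly rather than belabour the arithmetic.
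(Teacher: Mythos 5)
Your proposal follows exactly the paper's route: apply Lemma~\ref{lem:reduceMat} to turn a $\rho$-competitive $bn$-advice algorithm for \textsc{MBM} into a $\cStr$ algorithm with $cbn$ advice bits correct on $\alpha n$ characters, check that $1/c! \le \alpha < 1$ so Theorem~\ref{thm:BockThm} applies with $q = c!$, divide the resulting inequality $cbn \ge (1 - H_q(1-\alpha))(\log c!)\,n$ by $c$, and lower-bound $\frac{\log c!}{c}$. Your explicit verification of the range condition on $\alpha$ is in fact slightly more careful than the paper's one-line assertion.

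The one genuine flaw is the final factorial estimate. You use $c! \ge (c/2)^{c/2}$, which gives only $\frac{\log c!}{c} \ge \frac{1}{2}(\log c - 1)$, and you then claim the $-1$ is ``absorbable for $c \ge 3$'' into the stated $\frac{1}{2}\log c$. It is not: the theorem asserts the exact coefficient $\frac{(1-H_q(1-\alpha))}{2}\log c$, and $\frac{1}{2}(\log c - 1) < \frac{1}{2}\log c$ for every $c$ (at $c=3$ your constant is $\approx 0.29$ versus the claimed $\approx 0.79$), so as written you prove a strictly weaker bound, adequate only for the asymptotic Corollary~\ref{cor:lb-advice} but not for the statement itself. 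The fix is to use the sharper inequality $c! \ge c^{c/2}$, which the paper invokes; it follows by pairing factors, since $(c!)^2 = \prod_{i=1}^{c} i(c+1-i) \ge c^c$ because $i(c+1-i) \ge c$ for all $1 \le i \le c$. With this replacement, $\frac{\log c!}{c} \ge \frac{1}{2}\log c$ holds exactly and your argument matches the theorem verbatim.
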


\begin{proof}
  For $1 - \frac{1}{c} + \frac{1}{c!} \le \rho < 1$, let $\alg_\mat$ be a deterministic $\rho$-competitive
  online algorithm for \textsc{MBM}, using $bn$ bits of advice.
  By Lemma~\ref{lem:reduceMat}, there exists an algorithm for $\cStr$ that uses
  $cbn$ bits of advice and is correct for at least $\alpha n$ characters of the $n$-length
  input string. The bounds on $\rho$ and $c$ imply
  $1/(c!) \le \alpha \le 1$. Thus, Theorem~\ref{thm:BockThm} implies
  $cbn \ge ((1 - H_q(1-\alpha))\log (c!) )n$ and, hence,
  $$
    b \ge \frac{(1 - H_q(1-\alpha))}{c}\log (c!) \ge \frac{(1 - H_q(1-\alpha))}{2}\log c \text{, as $c! \ge c^{c/2}$.}
  $$
\end{proof}

Setting $\eps = 1/(2c) < 1/c - 1/(c!)$ for all $c \ge 3$, we get the following corollary. Note that,
as $\rho$ approaches $1$ from below, $\alpha$ also approaches $1$ from below and $H_q(1-\alpha)$ approaches $0$.

\begin{corollary} \label{cor:lb-advice}
  For any $0 < \eps \le 1/6$, any $(1 - \eps)$-competitive deterministic online algorithm with advice
  for \textsc{MBM} requires $\OO(\log(\frac{1}{\epsilon}) n)$ bits of advice.
\end{corollary}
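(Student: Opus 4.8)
The plan is to derive the corollary directly from Theorem~\ref{thm:lbEps} by choosing the free integer parameter $c$ as a function of $\eps$. Since Theorem~\ref{thm:lbEps} lower-bounds the per-character advice rate $b$ of any $\rho$-competitive algorithm by $\frac{1 - H_q(1-\alpha)}{2}\log c$ whenever $1 - \frac1c + \frac1{c!} \le \rho < 1$, the task reduces to picking $c$ so that (i) $\rho = 1 - \eps$ lies in the admissible window, (ii) $\log c = \Theta(\log \frac1\eps)$, and (iii) the entropy prefactor $1 - H_q(1-\alpha)$ is bounded below by an absolute positive constant. Once all three hold, multiplying by $n$ yields $bn = \Omega(\log(\frac1\eps)\,n)$.

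For the choice I would take $c$ to be the largest integer with $\eps \le \frac1{2c}$ (equivalently $\eps = \frac1{2c}$ when $\frac1{2\eps}$ is an integer); for $0 < \eps \le \frac16$ this gives an integer $c \ge 3$ with $c = \Theta(\frac1\eps)$, so requirement (ii) is immediate. To verify (i), note $\rho = 1 - \eps \ge 1 - \frac1{2c}$, and the inequality $\frac1{2c} \le \frac1c - \frac1{c!}$ is equivalent to $c! \ge 2c$, which holds for every $c \ge 3$; hence $1 - \frac1c + \frac1{c!} \le \rho < 1$ as required. With this substitution the correctness parameter of the induced string-guessing instance becomes
\[
  \alpha = 1 - (1-\rho)c = 1 - \eps c = \tfrac12,
\]
a constant, so that Theorem~\ref{thm:lbEps} gives $b \ge \frac{1 - H_{c!}(\frac12)}{2}\log c$.

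The step I expect to be the crux is requirement (iii): showing that the $q$-ary entropy prefactor $1 - H_{c!}(\frac12)$ does not degrade to $0$ as $c$ grows, since here $q = c! \to \infty$. The key observation is that for any fixed $p \in (0,1)$ one has $H_q(p) \to p$ as $q \to \infty$, because $\log_q(q-1) \to 1$ while $\log_q p$ and $\log_q(1-p)$ both tend to $0$; thus $H_{c!}(\frac12) \to \frac12$ and the prefactor tends to $\frac12$. Combined with the fact that $H_q(\frac12) < 1$ strictly for every $c \ge 3$ (the maximum of $H_q$ is attained at $\frac{q-1}{q} \neq \frac12$), the sequence $\bigl(1 - H_{c!}(\tfrac12)\bigr)_{c \ge 3}$ consists of positive numbers converging to $\frac12$, and therefore has a positive infimum $\beta > 0$. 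Consequently $b \ge \frac{\beta}{2}\log c = \Omega(\log \frac1\eps)$, and the total advice is $\Omega(\log(\frac1\eps)\,n)$, proving the corollary. A minor point to dispatch is the rounding in the definition of $c$: with $c = \lfloor \frac1{2\eps}\rfloor$ one gets $\alpha \in [\frac12, \frac23)$ rather than exactly $\frac12$, but the same entropy limit keeps the prefactor $\Omega(1)$ uniformly over this range, so the conclusion is unaffected.
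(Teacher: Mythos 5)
Your proposal is correct and takes essentially the same route as the paper: the paper derives the corollary by instantiating Theorem~\ref{thm:lbEps} with $\eps = 1/(2c)$, observing that $1/(2c) < 1/c - 1/(c!)$ for all $c \ge 3$ guarantees admissibility of $\rho = 1-\eps$ --- exactly your choice of $c = \Theta(1/\eps)$ up to the floor. If anything, your treatment of the entropy prefactor is more careful than the paper's, which only remarks (for fixed $c$, letting $\rho \to 1$) that $H_q(1-\alpha) \to 0$; under the actual coupling $\eps = 1/(2c)$ one has $\alpha = 1/2$, so the relevant fact is your computation that $1 - H_{c!}(1/2) \to 1/2$ and stays uniformly bounded away from $0$ (also over $\alpha \in [1/2, 2/3)$ after rounding), which you establish correctly.
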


\section{Advice Lower Bound for \textsc{Ranking} Algorithms} \label{sec:lb-rank}
Let $\sigma_1, \dots, \sigma_k: [n] \rightarrow [n]$ be rankings.
We will show that there is a $2n$-vertex graph $G = (A, B, E)$ and an arrival order $\pi$ such that
$|\Rank(G, \pi, \sigma_i)| \le n(\frac{1}{2}+\epsilon) + o(n)$, for every $\sigma_i$ and every constant
$\epsilon > 0$, while $G$ contains a perfect matching. Furthermore, the construction is such that
$k \in \Omega (\log \log n)$.

The key property required for our lower bound is the fact that we can partition the set of $B$-vertices
into disjoint subsets $B_1, \dots, B_q$, each of large enough size, such that for every
$B_i$ with $B_i = \{b_1, \dots, b_p \}$ and $b_1 < b_2 < \dots < b_p$, the sequence $(\sigma_j(b_i))_i$ is
monotonic, for every $1 \le j \le k$. In other words, the ranks of the nodes $b_1, \dots, b_p$ appear
in the rankings $\sigma_i$ in either increasing or decreasing order. For each set $B_i$,
we will construct a vertex-disjoint subgraph $G_i$ on which $\Rank$ computes a matching that is close to
a $\frac{1}{2}$-approximation. The subgraphs $G_i$ are based on graph $H_z$ that we define next.

\paragraph*{Construction of $H_z$}
We construct now graph $H_z = (U, V, F)$ with $U = V = [z]$, for some even integer $z$,
on which {\Rank} computes a matching that is close to a $\frac{1}{2}$-approximation, provided that the $V$
vertices are ranked in either increasing or decreasing order.

Let $U = \{u_1, \dots u_z \}$ be so that $u_i$ arrives before $u_{i+1}$ in $\pi$.
Let $V = \{v_1, \dots v_z \}$ be so that $v_i < v_{i+1}$ (which implies $v_i = i$). Then, for $1 \le i \le z/2$
we define $\Gamma_{H_z}(u_i) = \{v_{2i-1}, v_{2i}, v_{2i+1} \}$, and for $z/2 < i \le z$
we define $\Gamma_{H_z}(u_i) = \{ v_{2i-z-1} \}$.
The graph $H_8$ is illustrated in Figure~\ref{fig:lb-ranking}. It has the following properties:
\begin{enumerate}
 \item If the sequence $(\sigma_i(b_j))_j$ is increasing, then $|\Rank(H_z, \pi, \sigma_i)| = z/2$.
 \item If the sequence $(\sigma_i(b_j))_j$ is decreasing, then $|\Rank(H_z, \pi, \sigma_i)| = z/2+1$.
 \item $H_z$ has a perfect matching (of size $z$).
\end{enumerate}

\begin{figure}
{\small
\begin{center}
\includegraphics[height=2.6cm]{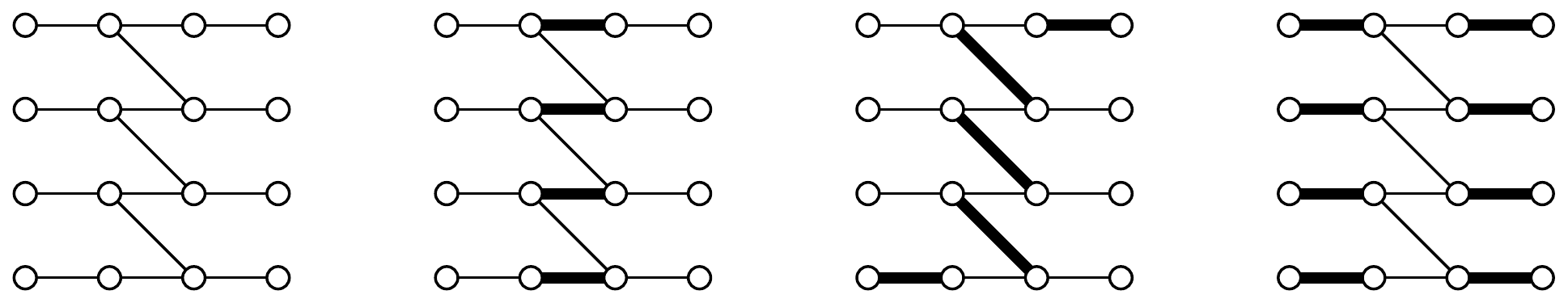}

\vspace{-3.03cm}
\textbf{V} \hspace{0.25cm} \textbf{U} \hspace{0.25cm} \textbf{V} \hspace{0.25cm} \textbf{U} \hspace{10.6cm} $ $

\vspace{0.23cm}

$v_2$ \hspace{0.25cm} $u_1$ \hspace{0.30cm} $v_1$ \hspace{0.25cm} $u_5$ \hspace{10.6cm} $ $

\vspace{0.32cm}

$v_4$ \hspace{0.25cm} $u_2$ \hspace{0.30cm} $v_3$ \hspace{0.25cm} $u_6$ \hspace{10.6cm} $ $

\vspace{0.32cm}

$v_6$ \hspace{0.25cm} $u_3$ \hspace{0.30cm} $v_5$ \hspace{0.25cm} $u_7$ \hspace{10.6cm} $ $

\vspace{0.32cm}

$v_8$ \hspace{0.25cm} $u_4$ \hspace{0.30cm} $v_7$ \hspace{0.25cm} $u_8$ \hspace{10.6cm} $ $

\vspace{-0.1cm}

$ $\hspace{3.8cm} \begin{minipage}{2cm}
\textsc{Ranking}: increasing ranks
\end{minipage} \hspace{1.4cm}
\begin{minipage}{2cm}
\textsc{Ranking}: decreasing ranks
\end{minipage} \hspace{1.2cm}
\begin{minipage}{2.5cm}
perfect matching

$ $
\end{minipage}

\end{center}
}
\caption{Left: $U$-vertices arrive in order $u_1, u_2, \dots, u_8$.
'\textsc{Ranking}: increasing ranks' shows the resulting matching when
$\sigma(v_1) < \sigma(v_2) < \dots < \sigma(v_8)$.
'\textsc{Ranking}: decreasing ranks' shows the resulting matching when
$\sigma(v_1) > \sigma(v_2) > \dots > \sigma(v_8)$.
Right: Perfect matching.
\label{fig:lb-ranking}}
\end{figure}

\paragraph*{Lower Bound Proof}
We prove first that we can appropriately partition the $B$-vertices that allow us to define the
graphs $G_i$. Our prove relies on the well-known Erd\H{o}s-Szekeres theorem \cite{es87} that we state in the
form we need first.

\begin{theorem}[Erd\H{o}s-Szekeres \cite{es87}] \label{thm:es}
 Every sequence of distinct integers of length $n$ contains a monotonic (either increasing or
 decreasing) subsequence of length $\lceil \sqrt{n} \rceil$.
\end{theorem}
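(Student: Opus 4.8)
The plan is to prove this via the classical pigeonhole argument applied to a clever labelling of the positions. Write the sequence as $a_1, a_2, \dots, a_n$. For each index $i \in [n]$, I would associate the pair $(x_i, y_i)$, where $x_i$ is the length of a longest increasing subsequence ending at position $i$, and $y_i$ is the length of a longest decreasing subsequence ending at position $i$. Both quantities are well-defined positive integers, each at most $n$. The goal is then to bound the number of distinct pairs that can occur and compare it with $n$.

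The key step, and the only genuinely non-routine part, is to show that the map $i \mapsto (x_i, y_i)$ is injective. I would take two indices $i < j$. Since the integers in the sequence are distinct, either $a_i < a_j$ or $a_i > a_j$. In the first case, any increasing subsequence of length $x_i$ ending at position $i$ can be extended by appending $a_j$, which forces $x_j \ge x_i + 1 > x_i$; in the second case, symmetrically, $y_j > y_i$. In either case $(x_i, y_i) \neq (x_j, y_j)$, so the labelling is injective. Everything else is counting, so this distinctness argument is where the real content lies.

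Finally I would invoke the pigeonhole principle and handle the ceiling arithmetic. Assume for contradiction that the sequence contains no monotonic subsequence of length $\lceil \sqrt{n} \rceil$. Then $x_i \le \lceil \sqrt{n} \rceil - 1$ and $y_i \le \lceil \sqrt{n} \rceil - 1$ for every $i$, so each pair $(x_i, y_i)$ lies in a set of size at most $(\lceil \sqrt{n} \rceil - 1)^2$. Because $\lceil \sqrt{n} \rceil - 1 < \sqrt{n}$, this set has strictly fewer than $n$ elements, which contradicts the injectivity of $i \mapsto (x_i, y_i)$ over the $n$ indices. The expected obstacle is purely bookkeeping: one must make sure the bound $(\lceil \sqrt{n} \rceil - 1)^2 < n$ is strict so that the pigeonhole contradiction goes through, which it does since $\lceil \sqrt{n} \rceil < \sqrt{n} + 1$. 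Hence a monotonic subsequence of length $\lceil \sqrt{n} \rceil$ must exist, as claimed.
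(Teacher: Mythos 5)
Your proof is correct. Note that the paper does not prove this statement at all: it is invoked as a black-box classical result with a citation to Erd\H{o}s--Szekeres, so there is no in-paper argument to compare against. What you give is the standard pigeonhole proof (usually attributed to Seidenberg): the labelling $i \mapsto (x_i, y_i)$ by longest increasing/decreasing subsequence lengths ending at $i$, the injectivity argument via the dichotomy $a_i < a_j$ or $a_i > a_j$ for $i < j$, and the counting step are all sound, and you correctly verify the one point where the ceiling could cause trouble, namely that $(\lceil \sqrt{n} \rceil - 1)^2 < n$ holds strictly because $\lceil \sqrt{n} \rceil - 1 < \sqrt{n}$. This is exactly the right level of care for the statement as formulated here, which is the symmetric $\lceil \sqrt{n} \rceil$ specialization of the usual $(r,s)$-form of the theorem.
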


\begin{lemma} \label{lem:decomp}
Let $\epsilon > 0$ be an arbitrary small constant. Then for any $k$ permutations
$\sigma_1, \dots, \sigma_k: [n] \rightarrow [n]$ with
$k \le \log \log n - \log \log \frac{1}{\epsilon} -2$,
there is a partition of $B = [n]$ into subsets $C, B_1, B_2, \dots$ such that:
\begin{enumerate}
 \item $|B_i| \ge 1/\epsilon$ for every $i$,
 \item $|C| \le \sqrt{n}$,
 \item For every $B_i = \{b_1, \dots, b_p \}$ with $b_1 < b_2 < \dots < b_p$, and every $\sigma_j$, the sequence
$(\sigma_j(b_l))_l$ is monotonic.
\end{enumerate}
\end{lemma}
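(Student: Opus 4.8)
The plan is to combine a single ``multi-monotone extraction'' claim, obtained by iterating the Erd\H{o}s--Szekeres theorem (Theorem~\ref{thm:es}), with a greedy peeling argument. The extraction claim I would prove first is: for any subset $S \subseteq [n]$ of size $N$, there is a subset $T \subseteq S$ with $|T| \ge N^{1/2^k}$ such that, writing $T = \{b_1 < \dots < b_p\}$ in natural order, the sequence $(\sigma_j(b_l))_l$ is monotonic for \emph{every} $j \in \{1,\dots,k\}$ simultaneously. Note that the directions may differ across the $\sigma_j$; this matches what we need, since the subgraph $H_z$ handles both the increasing and the decreasing case.

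I would establish the claim by induction on $k$. For $k=1$, apply Theorem~\ref{thm:es} to the sequence obtained by listing $\sigma_1(s)$ as $s$ ranges over $S$ in natural order; this yields a monotone subsequence, hence a subset $S_1 \subseteq S$ with $|S_1| \ge \lceil \sqrt{N} \rceil \ge N^{1/2}$ on which $\sigma_1$ is monotonic. For the inductive step, first extract such an $S_1 \subseteq S$, then apply the claim for $k-1$ permutations to $S_1$ using $\sigma_2, \dots, \sigma_k$, obtaining $T \subseteq S_1$ with $|T| \ge |S_1|^{1/2^{k-1}} \ge N^{1/2^k}$. The only point to verify is that monotonicity of $\sigma_1$ is inherited by $T$: since $T \subseteq S_1$ and a subsequence of a monotone sequence is monotone, $\sigma_1$ remains monotonic on $T$ in natural order.

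With the claim in hand, I would peel off blocks greedily. Set $R_0 = B = [n]$; while $|R_{i-1}| > \sqrt{n}$, use the claim to extract a multi-monotone block $B_i \subseteq R_{i-1}$ of size at least $|R_{i-1}|^{1/2^k}$ and set $R_i = R_{i-1} \setminus B_i$. The process terminates because each extracted block is nonempty, and upon stopping I set $C = R_i$, which satisfies $|C| \le \sqrt{n}$, giving property~2. Property~3 holds for every $B_i$ directly by the extraction claim. For property~1, observe that whenever a block $B_i$ is extracted we have $|R_{i-1}| > \sqrt{n}$, so $|B_i| \ge (\sqrt{n})^{1/2^k} = n^{1/2^{k+1}}$. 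The hypothesis $k \le \log\log n - \log\log \frac{1}{\epsilon} - 2$ yields $2^{k+1}\log\frac{1}{\epsilon} \le \log n$, i.e.\ $n^{1/2^{k+1}} \ge 1/\epsilon$, so indeed $|B_i| \ge 1/\epsilon$.

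The main obstacle is purely bookkeeping: confirming that the $k$-fold composition of square roots really produces the exponent $1/2^k$ (here the ceilings in Theorem~\ref{thm:es} only help, since $\lceil\sqrt{N}\rceil \ge \sqrt{N}$), and that the single threshold $\sqrt{n}$ at which the peeling stops is simultaneously consistent with the bound $|C| \le \sqrt{n}$ and with the lower bound $n^{1/2^{k+1}} \ge 1/\epsilon$ on each block. The slack of $2$ (rather than $1$) in the bound on $k$ comfortably absorbs these rounding effects and the strict-versus-nonstrict inequality at the stopping threshold.
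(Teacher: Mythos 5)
Your proof is correct and follows essentially the same route as the paper: iterated application of Erd\H{o}s--Szekeres (which the paper writes as an explicit nested chain $B'_1 \supseteq B'_2 \supseteq \dots$ rather than your induction) to extract blocks monotone under all $k$ permutations, peeling until the remainder has size at most $\sqrt{n}$, with the hypothesis on $k$ guaranteeing $n^{1/2^{k+1}} \ge 1/\epsilon$ for each block. Your write-up is, if anything, slightly more careful about the stopping threshold and rounding than the paper's.
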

\begin{proof}
 Let $S = B$. We iteratively remove subsets $B_i$ from $S$ until $|S| \le \sqrt{n}$. The remaining elements then define
 set $C$. Thus, by construction, Item~2 is fulfilled.

 Suppose that we have already defined sets $B_1, \dots, B_i$. We show how to obtain set $B_{i+1}$.
 Let $S = B \setminus \bigcup_{j=1}^i B_j$ ($S=B$ if $i=0$). Note that $|S| \ge \sqrt{n}$.
 By Theorem~\ref{thm:es}, there is a
 subset $B'_1 = \{b^1_1, \dots, b^1_{\lceil n^{1/4} \rceil} \} \subseteq S$
 with $b^1_1 < b^1_2 < \dots < b^1_{\lceil n^{1/4} \rceil }$ such that the sequence
 $(\sigma_1(b_i))_{b_i \in B'_1}$ is monotonic. Then, again by Theorem~\ref{thm:es}, there is
 a subset $B'_2 = \{b^2_1, \dots, b^2_{ \lceil n^{1/8} \rceil } \} \subseteq B'_1$ with $b^2_1 < b^2_2 < \dots < b^2_{\lceil n^{1/8} \rceil }$
 such that the sequences $(\sigma_j(b_i))_{b_i \in B'_2}$ are monotonic, for every $j \in \{1,2\}$. Similarly,
 we obtain that there is a subset $B'_{w} = \{b^w_1, \dots,  b^w_{\lceil n^{ (1/2)^{w+1} }\rceil} \} \subseteq B'_{w-1}$ with
 $b^w_1 < b^w_2 < \dots < b_{\lceil n^{ (1/2)^{w+1} } \rceil }$ such that the sequences
 $(\sigma_j(b_i))_{b_i \in B'_w}$ are monotonic, for every $j \in \{1, \dots, w\}$.

 In order to guarantee Item~1, we solve the inequality
 $n^{ (\frac{1}{2})^{w+1}} \ge \frac{1}{\epsilon}$
for $w$, and we obtain $w \le \log \log n - \log \log \frac{1}{\epsilon} -2$. This completes the proof.
\end{proof}

Equipped with the previous lemma, we are ready to prove our lower bound result.
\begin{theorem} \label{thm:lb-ranking}
 Let $\epsilon > 0$ be an arbitrary constant. For any $k$
 permutations $\sigma_1, \dots, \sigma_k: [n] \rightarrow [n]$ with
$k \le \log \log n - \log \log \frac{2}{\epsilon} -2$ and arrival order $\pi: [n] \rightarrow [n]$,
there is a graph $G = (A, B, E)$ such that for every $\sigma_i$:
$$|\Rank(G, \pi, \sigma_i)| \le (\frac{1}{2} + \epsilon) n + o(n), $$
while $G$ contains a perfect matching.
\end{theorem}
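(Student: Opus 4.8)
The plan is to combine the decomposition of Lemma~\ref{lem:decomp} with vertex-disjoint copies of the gadget $H_z$. First I would apply Lemma~\ref{lem:decomp} with parameter $\epsilon/2$ in place of its $\epsilon$; the hypothesis $k \le \log \log n - \log \log \frac{2}{\epsilon} - 2$ is precisely what is needed for the lemma to apply with this parameter. This yields a partition $B = C \cup B_1 \cup B_2 \cup \dots$ with $|C| \le \sqrt{n}$, each $|B_i| \ge 2/\epsilon$, and such that for every block $B_i = \{b_1 < \dots < b_p\}$ and every ranking $\sigma_j$ the sequence $\sigma_j(b_1), \dots, \sigma_j(b_p)$ is monotonic. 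Monotonicity in \emph{all} $\sigma_j$ simultaneously is exactly the property that lets a single planted gadget be bad for all $k$ rankings at once.

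On each block $B_i$ I would plant a vertex-disjoint copy $G_i$ of $H_{z_i}$, where $z_i$ is $|B_i|$ rounded down to the nearest even number (discarding at most one $B$-vertex per block). The $V$-side of $H_{z_i}$ is identified with the retained vertices of $B_i$ in increasing identifier order, so $v_\ell$ is the $\ell$-th smallest identifier of $B_i$; the $U$-side is identified with a block of $z_i$ fresh $A$-vertices. Because the $G_i$ are vertex-disjoint, only the relative arrival order of the $A$-vertices \emph{inside} a given $G_i$ affects $\Rank$ there, so I simply relabel the chosen $A$-vertices as $u_1, \dots, u_{z_i}$ in increasing order of $\pi$. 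This reconciles the adversarially given $\pi$ with the ordering $H_{z_i}$ requires, no matter how $\pi$ interleaves vertices of distinct blocks. Since each block is monotonic in every $\sigma_j$, properties (1)--(2) of $H_z$ give $|\Rank(G_i, \pi, \sigma_j)| \le z_i/2 + 1$ for all $j$, while property (3) supplies a perfect matching of $G_i$.

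It remains to dispose of the leftover vertices ($C$, the at most one discarded vertex per block, and the unused $A$-vertices) so that $G$ still admits a global perfect matching. I would add a perfect matching between the leftover $A$-vertices and the leftover $B$-vertices; the counts agree because each block consumes equally many $A$- and $B$-vertices. These trivial edges are all matched by $\Rank$, but their number is $n - \sum_i z_i \le |C| + t$, where $t$ is the number of blocks. Writing $|\Rank(G,\pi,\sigma_j)| \le \sum_i (z_i/2 + 1) + (n - \sum_i z_i) = n - \tfrac{1}{2}\sum_i z_i + t$ and substituting $\sum_i z_i \ge n - |C| - t$, $|C| \le \sqrt{n}$, and $t \le \epsilon n / 2$ (each block has size $\ge 2/\epsilon$), this is bounded by $(\frac{1}{2} + \epsilon)n + o(n)$; meanwhile the union of the gadget perfect matchings with the leftover matching is a perfect matching of $G$ of size $n$.

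The one genuinely delicate point is the bookkeeping that prevents both the $+1$ per gadget and the leftover edges from eroding the $\frac{1}{2}$ barrier: both are controlled because the block-size bound $2/\epsilon$ caps the number of blocks $t$ at $\epsilon n / 2$, and the residual set $C$ together with the discarded vertices is absorbed into the $o(n)$ term (they total $O(\sqrt{n} + \epsilon n)$). Everything else is a direct assembly of the already-established gadget properties and the decomposition lemma, with the relabeling of the $A$-vertices being the only modeling subtlety, and it is harmless precisely because the planted subgraphs are vertex-disjoint.
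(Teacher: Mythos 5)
Your proposal is correct and follows essentially the same route as the paper's proof: apply Lemma~\ref{lem:decomp} with parameter $\epsilon/2$, plant vertex-disjoint copies of $H_z$ on the monotone blocks (the paper handles odd blocks by keeping an extra single edge inside the gadget rather than discarding a vertex into the leftover set, an immaterial accounting difference), absorb $C$ and the leftovers via a trivial perfect matching, and bound the total by charging at most $|B_i|/2 + \Order(1)$ per block with the block count capped at $\epsilon n/2$ by the size bound $|B_i| \ge 2/\epsilon$. Your treatment of the arrival order (relabeling the $u$-vertices of each gadget by their relative order under $\pi$, valid since the gadgets are vertex-disjoint) is exactly how the paper's construction of $H_z$ is meant to be instantiated, and your explicit bookkeeping is, if anything, slightly more careful than the paper's.
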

\begin{proof}
Let $\epsilon' = \epsilon / 2$.
Let $G = (A, B, E)$ denote the hard instance graph.
Let $C,  B_1, B_2, \dots$ denote the partition of $B$ according to Lemma~\ref{lem:decomp} with
respect to value $\epsilon'$.
Then, partition $A$ into sets $A_0, A_1, \dots$ such that $|A_0| = |C|$ and for
$i \ge 1$,  $|A_i| = |B_i|$. Graph $G$ is the disjoint union of subgraphs
$G_0 = (A_0, C, E_0)$ and $G_i = (A_i, B_i, E_i)$, for $i \ge 1$. Subgraph $G_0$
is an arbitrary graph that contains a perfect matching. If $|B_i|$ is even, then
$G_i$ is an isomorphic copy of $H_i$. If $|B_i|$ is odd, then
$G_i$ is the disjoint union of an isomorphic copy of $H_{i-1}$ and one edge.
Then,
 $$|\Rank(G, \pi, \sigma_i)| \le \sum_{B_i} (|B_i|/2 + 2) + |C| \le n/2 + 2 \epsilon' n + \sqrt{n}. $$
\end{proof}

\section{Category Algorithms} \label{sec:cat-algos}
\subsection{Randomized Category Algorithm} \label{sec:rand-cat}
In this section, we analyse the following randomized $\Rank$-algorithm:
\begin{algorithm}[H]
 \begin{algorithmic}
  \REQUIRE $G = (A, B, E)$, integer parameter $k \ge 1$
  \STATE For every $b \in B: c(b) \gets $ random number in $\{1, 2, 3, \dots, 2^k\}$
  \STATE $\sigma_c \gets $ permutation on $[m]$ such that $\sigma_c(b_1) < \sigma_c(b_2)$ iff
  $\left( c(b_1) < c(b_2) \right)$ or $( c(b_1) = c(b_2) $  and $b_1 < b_2)$, for every $b_1, b_2 \in B$
  \RETURN $\Rank(\sigma_c)$
 \end{algorithmic}
 \caption{Randomized Category Algorithm \label{alg:rand}}
\end{algorithm}

\paragraph*{Considering Graphs with Perfect Matchings}
First, similar to \cite{bm08}, we argue that the worst-case performance ratio of Algorithm~\ref{alg:rand} is obtained
if the input graph contains a perfect matching. It requires the following observation:

\begin{theorem}[Monotonicity \cite{gm08,kvv90}]
 Consider a fixed arrival order $\pi$ and ranking $\sigma$ for an input graph $G=(A, B, E)$.
 Let $H = G \setminus \{v \}$ for some vertex $v \in A \cup B$. Let $\pi', \sigma'$ be the arrival order/ranking
 when restricted to vertices $A \cup B \setminus \{v \}$. Then, $\Rank(G, \pi, \sigma)$ and $\Rank(H, \pi', \sigma')$
 are either identical or differ by a single alternating path starting at $v$.
\end{theorem}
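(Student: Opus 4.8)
The plan is to run $\Rank$ on $G$ and on $H$ in parallel, feeding both the $A$-vertices in the common order $\pi$ (skipping $v$ in the $H$-run when $v \in A$), and to track the evolving discrepancy between the two matchings as requests are processed. Writing $M_t, M'_t$ for the two partial matchings after $t$ arrivals, the central claim I would establish by induction on $t$ is the invariant: either $M_t = M'_t$, or $M_t \triangle M'_t$ is a single alternating path with $v$ as one endpoint, and the only $B$-vertex whose free/matched status differs between the two runs is the $B$-endpoint (the \emph{frontier}) of this path. Since $\Rank$ only ever adds edges (matchings grow monotonically) and $v \notin H$, the vertex $v$ can be matched only in the $G$-run; hence once $v$ becomes matched it stays matched and remains an endpoint of the symmetric difference, which is what keeps the path anchored at $v$.

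For the base part I would isolate the two degenerate situations in which the runs never diverge: if $v \in B$ is never the minimum-rank free neighbour of any arriving $A$-vertex in the $G$-run (equivalently, $v$ is left unmatched), or if $v \in A$ arrives with no free neighbour, then removing $v$ changes no decision and $M = M'$. Otherwise I let the discrepancy be born at the first moment $v$ is matched in the $G$-run---to $b_0$ if $v \in A$, or $v$ itself being claimed by an $A$-vertex $a_0$ if $v \in B$---and I would first check that, up to this moment, all decisions coincide in the two runs, using that a vertex once matched stays matched, so a $B$-vertex can be claimed by at most one $A$-vertex.

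The heart of the argument is the inductive step, which I would phrase as a chain reaction. At each arrival the candidate sets seen by the two runs differ in exactly one vertex, the current frontier $B$-vertex $b$ (free in one run, matched in the other). An arriving $A$-vertex $a$ that does not select $b$ in the run where $b$ is free makes the identical choice in both runs, so the frontier does not move; the first $A$-vertex that does claim $b$ is forced in the other run to fall back to its next-lowest-rank free neighbour $b'$, or to remain unmatched. In the former case the frontier advances from $b$ to $b'$, extending the alternating path by the two edges through $a$; in the latter case the path terminates at $a$. I would verify that this single fall-back is exactly the next choice on which the two runs can disagree, so the discrepancy never forks into two paths and the invariant is restored.

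The main obstacle will be the bookkeeping in this inductive step: proving rigorously that between two successive chain events every $A$-vertex makes the same decision in both runs---i.e., that the only availability difference the two runs ever carry is the single frontier $B$-vertex---and that the fall-back of the claiming $A$-vertex is well defined and advances the path by exactly one new frontier. Handling the two orientations uniformly ($v \in A$ versus $v \in B$, where the roles of ``free in $G$'' and ``free in $H$'' along the path are swapped) and the termination of the chain (exhaustion of the $A$-vertices, or an $A$-vertex with no remaining free neighbour) also needs care; but once the frontier invariant is in hand, the conclusion that $M_t \triangle M'_t$ is a single alternating path starting at $v$ is immediate, and taking $t = n$ yields the theorem.
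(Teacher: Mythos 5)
The paper does not prove this statement---it is imported verbatim from \cite{gm08,kvv90}---so there is no internal proof to compare against; your sketch reconstructs the standard chain-reaction argument from those references (see also Birnbaum and Mathieu \cite{bm08}) and is correct. The frontier invariant you isolate (the symmetric difference is a single alternating path anchored at $v$, with exactly one $B$-vertex of differing availability, so every arrival either repeats the identical choice in both runs or claims the frontier and falls back, extending or terminating the path) is precisely the induction hypothesis used in those proofs, including your observation that the matched-in-$G$/matched-in-$H$ orientation of the frontier flips between the cases $v \in A$ and $v \in B$.
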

The previous theorem shows that the size of the matching produced by Algorithm~\ref{alg:rand} is monotonic
with respect to vertex removals. Hence, if $H$ is the graph obtained from $G$ by removing all vertices that are
not matched by a maximum matching in $G$, then the performance ratio of $\Rank$ on $H$ cannot be better than on $G$.
We can thus assume that the input graph $G$ has a perfect matching and $|A| = |B| = n$.

\paragraph*{Analysis: General Idea}
Let $B_i = \{ b \in B \, : \, c(b) = i \}$, and denote the matching computed by the algorithm by $M$.
The important quantities to consider for the analysis of Algorithm~\ref{alg:rand} are the probabilities:
$$x_i = \Pr_{b \in B} \left[ b \in V(M)  \, | \, b \in B_i \right],$$
i.e., the probability that a randomly chosen $B$-vertex of category $i$ is matched by the algorithm. Determining
lower bounds for the quantities $x_i$ is enough in order to bound the expected matching size, since
\begin{eqnarray}
\nonumber \, \quad \quad \Exp |M| & = & \sum_{b \in B} \Pr \left[ b \in V(M) \right] = \sum_{b \in B} \sum_{i=1}^{2^k}
\Pr \left[b \in B_i \right] \cdot \Pr \left[ b \in V(M) \, | \, b \in B_i \right] \\
& = & \frac{1}{2^k} \sum_{b \in B} \sum_{i=1}^{2^k} \Pr \left[ b \in V(M) \, | \, b \in B_i \right]
= \frac{n}{2^k} \sum_{i=1}^{2^k} x_i . \label{eqn:299}
\end{eqnarray}
We will first prove a bound on $x_1$ using a previous result of Konrad et al. \cite{kmm12}.
Then, using similar ideas as Birnbaum and Mathieu \cite{bm08}, we will prove inequalities of the form
$x_{i+1} \ge f(x_i, \dots, x_1)$, for some function $f$ which allow us to bound the probabilities
$(x_i)_{i \ge 2}$.

\paragraph*{Bounding $x_1$}
Let $H=(U, V, F)$ be an arbitrary bipartite graph and let $U' \subseteq U$ be a uniform and random sample of $U$
such that a node $u \in U$ is in $U'$ with probability $p$. Konrad et al. showed in \cite{kmm12} that when
running \textsc{Greedy} on the subgraph induced by vertices $U' \cup \Gamma_G(U')$, a relatively large fraction
of the $U'$-vertices will be matched, for any order in which the edges of the input graph
are processed that is independent of the choice of $U'$. More precisely, they prove the following theorem
($\Greedy(H', \omega)$ denotes the output of \textsc{Greedy} on subgraph $H'$ if
edges of $H'$ are considered in the order given by $\omega$):
\begin{theorem}[\cite{kmm12}]
 Let $H = (U, V, F)$ be a bipartite graph, $M^*$ a maximum matching, and let $U' \subseteq U$ be a uniform and independent random sample
 of $U$ such that every vertex belongs to $U'$ with probability $p$, $0 < p \le 1$. Then for any edge arrival
 order $\omega$, $$\Exp |\Greedy(H[U' \cup \Gamma_H(U')], \omega)| \ge \frac{p}{1+p} |M^*|.$$
\end{theorem}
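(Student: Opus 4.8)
The plan is to prove $\Exp|M| \ge \frac{p}{1+p}|M^*|$ by a leave-one-out conditioning argument, where I abbreviate $M := \Greedy(H[U' \cup \Gamma_H(U')], \omega)$. First I would fix a maximum matching $M^*$ and enumerate its edges as $(u_1,v_1),\dots,(u_{n^*},v_{n^*})$ with $u_i \in U$, $v_i \in V$, and $n^* = |M^*|$. Since the $u_i$ are distinct $U$-vertices, the $v_i$ are distinct $V$-vertices, and $|M|$ equals both the number of matched $U$-vertices and the number of matched $V$-vertices, I obtain the two one-sided estimates $\Exp|M| \ge \sum_i \Pr[u_i \in V(M)]$ and $\Exp|M| \ge \sum_i \Pr[v_i \in V(M)]$. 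The entire proof consists in playing these two expressions against each other.

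For each index $i$ I would condition on $U'_{-i} := U' \setminus \{u_i\}$, the outcome of the sampling of every $U$-vertex other than $u_i$, and set $M_{-i} := \Greedy(H[U'_{-i} \cup \Gamma_H(U'_{-i})], \omega)$, the greedy matching that ignores $u_i$. The key point is that $M_{-i}$, and hence the indicator $Z_i := \mathbb{1}[v_i \in V(M_{-i})]$, depend only on $U'_{-i}$, so that conditioned on $U'_{-i}$ the only remaining randomness is the probability-$p$ coin deciding whether $u_i \in U'$.

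The heart of the argument is two structural claims, both obtained from the Monotonicity theorem (which says that removing $u_i$ alters the greedy matching by at most a single alternating path starting at $u_i$) together with maximality of the greedy matching. First, if $Z_i = 1$ then $v_i$ stays matched in the full run $M$ no matter what the coin of $u_i$ does: if $u_i \notin U'$ then $M = M_{-i}$, and if $u_i \in U'$ then the alternating path emanating from $u_i$ cannot free the already-matched vertex $v_i$; hence $\Pr[v_i \in V(M) \mid U'_{-i}] \ge Z_i$. Second, if $Z_i = 0$ and $u_i \in U'$, then $u_i$ must be matched in $M$: otherwise monotonicity forces $M = M_{-i}$, leaving both endpoints of the edge $(u_i,v_i)$ unmatched and contradicting maximality; hence $\Pr[u_i \in V(M) \mid U'_{-i}] \ge p(1 - Z_i)$. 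I expect the main obstacle to be checking that this monotonicity property genuinely holds for the edge-arrival $\Greedy$ and every order $\omega$, rather than only for the vertex-arrival $\Rank$ for which the theorem is literally stated; this should follow from the standard fact that any greedy maximal-matching-by-insertion process changes by a single alternating path under deletion of a vertex, but it is the step that needs to be nailed down carefully.

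Finally I would take expectations over $U'_{-i}$, sum over $i$, and write $Z := \sum_i \Pr[Z_i = 1]$. The first claim gives $\Exp|M| \ge \sum_i \Pr[v_i \in V(M)] \ge Z$, and the second gives $\Exp|M| \ge \sum_i \Pr[u_i \in V(M)] \ge p(n^* - Z)$. It then only remains to eliminate $Z$: the first inequality yields $Z \le \Exp|M|$, and substituting this into the second yields $\Exp|M| \ge p(n^* - \Exp|M|)$, i.e. $(1+p)\Exp|M| \ge p\,n^*$, which rearranges to $\Exp|M| \ge \frac{p}{1+p}|M^*|$. As sanity checks, for $p=1$ this recovers the standard $\tfrac12$ guarantee of greedy, while for small $p$ it correctly reflects that sparsely sampled $U$-vertices rarely collide and are thus almost all matched.
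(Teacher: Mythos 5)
Your proof is correct, but note that the paper itself gives no proof of this statement: it is imported verbatim from Konrad et al.~\cite{kmm12}, so there is nothing in this paper to compare against line by line. Your leave-one-out conditioning argument is essentially the known route to the $\frac{p}{1+p}$ bound (and is exactly why the theorem needs $\omega$ to be independent of $U'$ and the sampling to be independent across vertices of $U$). Two points you flagged or glossed over deserve to be made explicit. First, the monotonicity fact you rely on is indeed true for edge-arrival \textsc{Greedy}, not just for vertex-arrival \textsc{Ranking} as stated in the paper: by induction on the prefix of $\omega$, the symmetric difference of the runs on $H'$ and on $H' - u_i$ is at all times a single alternating path starting at $u_i$ (the only vertex at which the two runs can disagree on matched status is the current endpoint of that path). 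A small technicality: $H[U' \cup \Gamma_H(U')]$ minus $u_i$ differs from $H[U'_{-i} \cup \Gamma_H(U'_{-i})]$ by the vertices of $\Gamma_H(u_i)\setminus\Gamma_H(U'_{-i})$, but these are isolated in the smaller graph, so the vertex-deletion lemma applies verbatim and $\Greedy(H[U'\cup\Gamma_H(U')]-u_i,\omega)=M_{-i}$. Second, your assertion in Claim~1 that the alternating path ``cannot free the already-matched vertex $v_i$'' needs the bipartite parity observation: since the path starts at $u_i\in U$, its edges alternate $M$-edge, $M_{-i}$-edge, $\ldots$, so the only vertex that can be matched in $M_{-i}$ but free in $M$ is the terminal vertex of an even-length path, which lies in $U$; hence no $V$-vertex, in particular not $v_i$, loses matched status. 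With these two details nailed down, Claim~2 (maximality of \textsc{Greedy} on the sampled subgraph, which contains the edge $u_iv_i$ whenever $u_i\in U'$) and your final averaging step $\Exp|M|\ge Z$ and $\Exp|M|\ge p\left(|M^*|-Z\right)$ are airtight and give $(1+p)\Exp|M|\ge p|M^*|$ as claimed.
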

In $\Rank$, the vertices $B_1$ are always preferred over vertices $B \setminus B_1$. Thus,
the matching $M_1 = \{ ab \in M \, | \, b \in B_1 \}$ is identical to the matching obtained
when running $\Rank$ on the subgraph induced by $A \cup B_1$. Since the previous theorem
holds for any edge arrival order (that is independent from the choice of $B'$), we can apply
the theorem (setting $B' = B_1, p = \frac{1}{2^k}$) and we obtain:
$$\Exp |B_1 \cap V(M) | \ge \frac{\frac{1}{2^k}}{1 + \frac{1}{2^k}} n = \frac{1}{2^k + 1} n.$$
Since $\Exp |B_1 \cap V(M) | = \sum_{b \in B} \Pr \left[b \in B_1 \right] \cdot
\Pr \left[ b \in V(M) \, | \, b \in B_1 \right] = \frac{n}{2^k} x_1$, we obtain $x_1 \ge 1 - \frac{1}{2^k + 1}$.

\paragraph*{Bounding $(x_i)_{i \ge 2}$} The key idea of the analysis of Birnbaum and Mathieu for
the \textsc{KVV}-algorithm is the observation that, if a $B$-vertex of rank $i$ is not matched by the
algorithm, then its partner in an optimal matching is matched to a vertex of rank smaller than $i$.
Applied to our algorithm, if a $B$-vertex of category $i$ is not matched, then its optimal
partner $M^*(b)$ is matched to a $B$-vertex that belongs to a category $j \le i$. Thus:
\begin{eqnarray}
\nonumber \,\,& &  1 - x_i = \Pr_{b \in B} \left[ b \notin V(M) \, | \, b \in B_i \right]  = \\
& & \quad \quad \Pr_{b \in B} \left[ b \notin V(M) \text{ and } M^*(b) \text{ matched in $M$ to a $b'$ with $c(b') \le i$} \, | \, b \in B_i \right]. \label{eqn:456}
\end{eqnarray}
The following lemma is similar to a clever argument by Birnbaum and Mathieu \cite{bm08}.
\begin{lemma} \label{lem:birnbaum-mathieu}
\begin{eqnarray}
\nonumber \quad & & \Pr_{b \in B} \left[ b \notin V(M) \text{ and } M^*(b) \text{ matched in $M$ to a $b'$ with $c(b') \le i$} \, | \, b \in B_i \right] \\
& & \quad \quad \quad \quad \le \Pr_{b \in B} \left[ M^*(b) \text{ matched in $M$ to a $b'$ with $c(b') \le i$} \right]. \label{eqn:918}
\end{eqnarray}
\end{lemma}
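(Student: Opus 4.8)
The plan is to prove the stated inequality by first collapsing the left-hand side using the structural fact already recorded in~\eqref{eqn:456}, and then removing the conditioning through a monotonicity-based coupling in the spirit of Birnbaum and Mathieu. Conditioned on $b \in B_i$ (that is, $c(b) = i$), the event $b \notin V(M)$ already \emph{forces} $M^*(b)$ to be matched to some $b'$ with $c(b') \le i$: when the $A$-vertex $M^*(b)$ arrives and finds $b$ still free, $\Rank$ can leave $b$ unmatched only by assigning $M^*(b)$ to a strictly higher-priority free vertex, and every such vertex has category at most $i$. Hence, under the conditioning, the left-hand event coincides with $\{b \notin V(M)\}$, so the left-hand side equals $1 - x_i$.

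The crux is then to pass from the conditional probability $\Pr_{b\in B}[\,b \notin V(M) \mid b \in B_i\,]$ to the unconditional $\Pr_{b\in B}[E]$, where $E$ denotes the event that $M^*(b)$ is matched to a $b'$ with $c(b') \le i$. Here I would fix the categories of all vertices other than $b$ and treat $c(b)$ as a fresh uniform draw, using two monotonicity properties of $\Rank$: the stability of $\sigma_c$ under a single category change (reassigning $c(b)$ does not alter the relative order of $B \setminus \{b\}$) together with the Monotonicity theorem. These give the key sub-claim that if $b$ is left unmatched when $c(b) = i$, then for every category $j \ge i$ the run of $\Rank$ returns the identical matching, so $b$ stays unmatched and $E$ persists. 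This couples each outcome counted on the left to the outcomes on which $E$ holds as $c(b)$ is re-randomized over $\{i, i+1, \dots, 2^k\}$.

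To see that the coupling yields the full right-hand side I would evaluate it directly. Since $G$ has a perfect matching, as $b$ ranges uniformly over $B$ its optimal partner $a = M^*(b)$ ranges uniformly over $A$, and the event that $a$ is matched to a $b'$ with $c(b') \le i$ counts precisely the matched $B$-vertices of category at most $i$; taking expectations over the random categories gives $\Pr_{b\in B}[E] = \frac{1}{2^k}\sum_{j=1}^{i} x_j$. Thus the lemma is equivalent to the recursion $1 - x_i \le \frac{1}{2^k}\sum_{j \le i} x_j$, the category analogue of the Birnbaum--Mathieu inequality, which together with the bound on $x_1$ drives the competitive-ratio calculation.

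The main obstacle is precisely the conditioning-removal step. The monotonicity sub-claim cleanly controls the categories $j \ge i$, but when $c(b)$ is lowered below $i$ the vertex $b$ gains priority and may trigger an alternating cascade that reroutes the matching and can destroy $E$, so an outcome-by-outcome comparison is too lossy. I expect to overcome this by averaging over the uniformly random choice of $b$ rather than arguing for each fixed category profile, which is exactly where the cleverness of the Birnbaum--Mathieu charging enters.
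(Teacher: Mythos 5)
There is a genuine gap, and it sits exactly where you flag ``the main obstacle'': your coupling re-randomizes $c(b)$ but establishes persistence of the event only for demotions $j \ge i$ (an unmatched vertex stays unmatched, with the matching unchanged, when its category is increased). That sub-claim is true but insufficient: it only yields bounds of the form $\frac{2^k-i+1}{2^k}\cdot\Pr_{b\in B}\left[ b \notin V(M) \mid b \in B_i \right] \le \Pr_{b\in B}\left[E\right]$, which is strictly weaker than the lemma for every $i \ge 2$, and your proposed remedy --- ``averaging over the uniformly random choice of $b$'' --- is a hope rather than an argument. It is also not the mechanism the paper uses: the paper's proof is pointwise, not an averaging argument. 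It takes $c$ uniform, picks a uniform $\tilde b$, and defines $c'$ agreeing with $c$ except $c'(\tilde b) = i$ (which realizes the conditional measure on the left-hand side); it then shows that \emph{whenever} the conditioned event holds under $c'$ --- i.e., $\tilde b$ is unmatched in $\Rank(\sigma_{c'})$ and $\tilde a = M^*(\tilde b)$ is matched there to some $d'$ with $c'(d') \le i$ --- the unconditioned event already holds under $c$, for \emph{every} value of $c(\tilde b)$, promotions $c(\tilde b) < i$ included.

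The missing idea is how those promotions are tamed. If $\tilde b$ is also unmatched in $\Rank(\sigma_c)$, the two runs coincide and the claim is immediate. Otherwise, $\Rank(\sigma_c)$ and $\Rank(\sigma_{c'})$ differ by a single alternating path $b_0 = \tilde b, a_1, b_1, a_2, \dots$ with $a_{j+1} b_j \in \Rank(\sigma_c)$, $a_j b_j \in \Rank(\sigma_{c'})$ and monotone ranks along the path; this forces, for each $A$-vertex $a_j$ on the path, the category of its partner in $\Rank(\sigma_c)$ to be at most the category of its partner in $\Rank(\sigma_{c'})$, while $A$-vertices off the path keep the same partner in both runs. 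So the ``alternating cascade'' you worry about can only \emph{improve} (never degrade) the category of $M^*(b)$'s partner, and the event $\{M^*(b) \text{ matched to a } b' \text{ with } c(b') \le i\}$ survives outcome by outcome --- no averaging over $b$ is needed or used. The parts of your proposal that do go through, namely that the left-hand side equals $1 - x_i$ (this is \eqref{eqn:456}) and that the right-hand side evaluates to $\frac{1}{2^k}\sum_{j \le i} x_j$, are the displays that bracket the lemma in the paper; the lemma itself is precisely the step your proposal leaves open.
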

\begin{proof}
 Let $c$ be uniformly distributed and let $\sigma_c$ be the respective ranking. Pick now a random
 $\tilde{b} \in B$ and create new categories $c'$ such that $c'(\tilde{b}) = i$ and for all
 $b \neq \tilde{b}: c'(b) = c(b)$. Let $\sigma_{c'}$ be the ranking given by $c'$.

 Let $\tilde{a} = M^*(\tilde{b})$. Suppose that in a run of $\Rank(\sigma_{c'})$, $\tilde{a}$ is matched
 to a vertex $d'$ with $c'(d') \le i$ and $\tilde{b}$ remains unmatched. Then, we will show that
 in the run of $\Rank(\sigma_c)$, $\tilde{a}$ is matched to a vertex $d$ with $c(d) \le i$. This implies
 our result.

 First, suppose that $\tilde{b}$ remains unmatched in $\Rank(\sigma_c)$. Then, $\Rank(\sigma_c) = \Rank(\sigma_{c'})$
 and the claim is trivially true. Suppose now that $\tilde{b}$ is matched in $\Rank(\sigma_c)$. Then, similar to
 the argument of \cite{bm08}, it can be seen that $\Rank(\sigma_c)$ and $\Rank(\sigma_{c'})$ differ only by
 one alternating path $b_0, a_1, b_1, a_2, b_2, \dots$ starting at $b_0 = \tilde{b}$ such that for all $i$,
  (1) $a_{i+1} b_i \in \Rank(\sigma_c)$,
  (2) $a_i b_i \in \Rank(\sigma_{c'})$, and
  (3) $\sigma_c(b_i) > \sigma_c(b_{i+1})$.
Property (3) implies $c(b_i) \le c(b_{i+1})$. Thus if the category $\sigma_{c'}$ of the node that $a_i$ is matched
to in $\Rank(\sigma_{c'})$ is $k$, then the category $c$ of the node that $a_i$ is matched to in $\Rank(\sigma_c)$ is
also at most $k$.
\end{proof}

The right side of Inequality~\ref{eqn:918} can be computed explicitly as follows:
\begin{eqnarray}
 \nonumber \Pr_{b \in B} \left[ M^*(b) \text{ matched in $M$ to a $b'$ with $c(b') \le i$} \right] = \Pr_{b \in B} \left[ c(b) \le i \text{ and } b \in V(M) \right] =  \frac{1}{2^k} \sum_{j=1}^{i} x_j. \label{eqn:943}
\end{eqnarray}
This, together with Inequalities~\ref{eqn:456} and \ref{eqn:918}, yields $1-x_i \le \frac{1}{2^k} \sum_{j=1}^{i} x_j$.
We obtain:
\begin{theorem}\label{thm:rand}
 Let $k \ge 1$ be an integer. Then Algorithm~\ref{alg:rand} is a randomized online algorithm for
 \textsc{MBM} with competitive ratio
 $1 - \left( \frac{2^k}{2^k + 1}\right)^{2k}$
 that uses $k \cdot m$ random bits.
\end{theorem}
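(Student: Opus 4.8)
The plan is to assemble the three ingredients already prepared in this section: the exact identity for $\Exp|M|$ recorded in \eqref{eqn:299}, the base estimate $x_1 \ge 1 - \frac{1}{2^k+1}$ obtained from the sampling result of Konrad et al., and the recurrence $1 - x_i \le \frac{1}{2^k}\sum_{j=1}^{i} x_j$ coming from Lemma~\ref{lem:birnbaum-mathieu}. The random-bit count is immediate: Algorithm~\ref{alg:rand} assigns to each of the $m$ vertices of $B$ a category drawn uniformly from $\{1,\dots,2^k\}$, which costs exactly $k$ bits per vertex and hence $km$ bits overall. By the Monotonicity theorem we may assume that $G$ admits a perfect matching with $|A|=|B|=n$, so that $|M^*|=n$ and it suffices to prove $\Exp|M| \ge \bigl(1 - (\tfrac{2^k}{2^k+1})^{2^k}\bigr)n$; dividing by $n$ then yields the competitive ratio.

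The core of the argument is to collapse the coupled inequalities on the $x_i$ into a single geometric contraction. First I would set $S_i = \sum_{j=1}^{i} x_j$ (with $S_0 = 0$) and rewrite the recurrence $1 - x_i \le \frac{1}{2^k}(S_{i-1}+x_i)$ in the solved form $x_i \ge \frac{2^k - S_{i-1}}{2^k+1}$. Introducing the slack $T_i = 2^k - S_i$, a one-line manipulation gives $T_i \le \frac{2^k}{2^k+1}\,T_{i-1}$, with base value $T_0 = 2^k$; one checks that the separately established bound $x_1 \ge 1 - \frac{1}{2^k+1}$ is exactly the $i=1$ instance of this contraction. Because $x_j \le 1$ forces $S_i \le i \le 2^k$, every slack is non-negative, so the contraction iterates cleanly to $T_{2^k} \le \bigl(\tfrac{2^k}{2^k+1}\bigr)^{2^k}\cdot 2^k$, whence $S_{2^k} = 2^k - T_{2^k} \ge 2^k\bigl(1 - (\tfrac{2^k}{2^k+1})^{2^k}\bigr)$.

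Substituting this into \eqref{eqn:299} gives $\Exp|M| = \frac{n}{2^k}\,S_{2^k} \ge n\bigl(1 - (\tfrac{2^k}{2^k+1})^{2^k}\bigr)$, which together with $|M^*| = n$ proves the bound. I note that the analysis naturally produces the exponent $2^k$, which agrees with the displayed $2k$ only for $k \in \{1,2\}$; since $2^k$ is the exponent consistent both with the value $5/9$ at $k=1$ and with the quoted estimate for $k=10$, I would state and prove the theorem with exponent $2^k$.

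The routine steps present no difficulty. The one point that genuinely requires care, and which I regard as the crux, is the passage from the per-index inequalities to a bound on the full sum $S_{2^k}$: a naive attempt to minimize each $x_i$ is self-defeating, since shrinking $x_i$ raises the forced lower bound on every later term, so the worst case is the trajectory in which all constraints are tight. The slack substitution $T_i = 2^k - S_i$ is precisely what resolves this tension, converting the whole coupled system into the single contraction $T_i \le \frac{2^k}{2^k+1}T_{i-1}$ and thereby dispensing with any explicit optimization.
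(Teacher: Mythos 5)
Your proposal is correct and takes essentially the same route as the paper: the same reduction via monotonicity to graphs with a perfect matching, the same base bound $x_1 \ge 1-\frac{1}{2^k+1}$ and recurrence $1-x_i \le \frac{1}{2^k}\sum_{j=1}^{i}x_j$, and your slack substitution $T_i = 2^k - S_i$ with $T_i \le \frac{2^k}{2^k+1}T_{i-1}$ is algebraically identical to the paper's reformulation $S_i\bigl(1+\frac{1}{2^k}\bigr) \ge 1+S_{i-1}$ solved at equality (your version even supplies, via a one-line induction, a rigorous justification for the paper's unproved assertion that $S_{2^k}$ is minimized when every inequality is tight). You are also right about the exponent: the paper's own proof derives $S_{2^k} = 2^k\bigl(1-\bigl(\frac{2^k}{2^k+1}\bigr)^{2^k}\bigr)$, so the competitive ratio is $1-\bigl(\frac{2^k}{2^k+1}\bigr)^{2^k}$, and the ``$2k$'' in the theorem statement is a typo --- it agrees with $2^k$ only for $k \in \{1,2\}$, and only the exponent $2^k$ is consistent with the stated value $5/9$ at $k=1$ and the quoted gap of less than $0.0002$ at $k=10$.
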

\begin{proof}
 Following \cite{bm08}, the inequality $1-x_i \le \frac{1}{2^k} \sum_{j=1}^{i} x_j$ yields
 $S_i (1 + \frac{1}{2^k}) \ge 1 + S_{i-1}$,
 where $S_i = \sum_{j=1}^i x_i$ and $S_1 = x_1 \ge 1- \frac{1}{2^k + 1}$. According to Equality~\ref{eqn:299},
 we need to bound $S_{2^k}$ from below. Quantity $S_{2^k}$ is minimized if $S_i (1 + \frac{1}{2^k}) = 1 + S_{i-1}$,
 for all $i \ge 2$, which yields
 $$S_{i} = \sum_{j=1}^{i} (1 - \frac{1}{2^k + 1})^j = 2^k  \cdot \left( 1 - \left( \frac{2^k}{2^k + 1}\right)^i \right).$$
 The result follows by plugging $S_{2^k}$ into Equality~\ref{eqn:299}.
\end{proof}

\subsection{Advice Category Algorithm}
Let $\sigma: [m] \rightarrow [m]$ be the identity function, and let $M = \Rank(\sigma)$. It is well-known
that $M$ might be as poor as a $\frac{1}{2}$-approximation. Intuitively, $B$-vertices that are not matched
in $M$ are ranked too high in $\sigma$ and have therefore no chance of being matched. We therefore assign
category $1$ to $B$-vertices that are not matched in $M$, and category $2$ to all other nodes, see
Algorithm~\ref{alg:advice-category}. We will prove that this strategy gives a $\frac{3}{5}$-approximation algorithm.

\begin{algorithm}
 \begin{algorithmic}
  \STATE \textbf{Computation of advice bits}
  \STATE $\sigma \gets $ permutation such that $\sigma(b) = b$, $M_G \gets \Rank(\sigma)$, $M^* \gets $ maximum matching
   \STATE $\forall b \in B: c(b) \gets \begin{cases} 1, \mbox{ if } b \notin V(M), \\ 2, \mbox{ otherwise.} \end{cases}$
  \STATE \textbf{Online Algorithm with Advice} \COMMENT{Function $c$ is provided using $m$ advice bits}
  \STATE $\sigma_c \gets $ permutation on $[m]$ such that $\sigma_c(b_1) < \sigma_c(b_2)$ iff $\left( c(b_1) < c(b_2) \right)$ or $( c(b_1) = c(b_2) $  and $b_1 < b_2)$, for every $b_1, b_2 \in B$
  \RETURN $\Rank(\sigma_c)$
 \end{algorithmic}
\caption{Category-Advice Algorithm \label{alg:advice-category}}
\end{algorithm}

Our analysis requires a property of $\Rank$ that has been previously used, e.g., in \cite{bm08}.
 \begin{lemma}[Upgrading unmatched vertices, Lemma~4 of \cite{bm08}] \label{lem:upgrade}
  Let $\sigma$ be a ranking and let $M = \Rank(\sigma)$. Let $b \in B$ be a vertex that is not matched in $M$.
  Let $\sigma'$ be the ranking obtained from $\sigma$ by changing the rank of $b$ to any rank that is smaller
  than $\sigma(b)$ (and shifting the ranks of other vertices accordingly), and let
  $M' = \Rank(\sigma')$. Then, every vertex $a \in A$ matched in $M$ to a vertex $b \in B$
  is matched in $M'$ to a vertex $b' \in B$ with $\sigma(b') \le \sigma(b)$.
 \end{lemma}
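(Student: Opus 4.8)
The plan is to show that the two executions $\Rank(\sigma)$ and $\Rank(\sigma')$ produce matchings whose symmetric difference is a single alternating path that starts at the upgraded vertex and along which the ranks \emph{under the new ranking} $\sigma'$ strictly increase. Write $\hat b$ for the upgraded vertex (the vertex denoted $b$ in the statement, which is unmatched in $M$), and measure the partners' ranks under $\sigma'$; since $\sigma'$ only lowers the rank of $\hat b$ and preserves the relative order of all other vertices, this is the meaningful reading of the inequality $\sigma(b') \le \sigma(b)$. Granting the structural claim, the conclusion is immediate: every $a \in A$ lying off the path has $M'(a) = M(a)$, while for an $a$ on the path its $M'$-partner is the predecessor of its $M$-partner along the path and therefore has strictly smaller $\sigma'$-rank.

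To establish the structural claim I would run $\Rank(\sigma)$ and $\Rank(\sigma')$ simultaneously in the common arrival order $\pi$ and induct on the number of processed $A$-vertices, maintaining the following invariant. Either (mode~0) the two partial matchings coincide and $\hat b$ is still unmatched in both, or (mode~1) they differ exactly along a simple alternating path $\hat b = c_0, a_1, c_1, \dots, a_p, c_p$ with $c_{j-1} a_j \in M'$ and $a_j c_j \in M$, whose final $B$-vertex $c_p$ is the unique vertex that is matched in the $\sigma$-run but free in the $\sigma'$-run, and which satisfies $\sigma'(c_0) < \sigma'(c_1) < \dots < \sigma'(c_p)$. Two observations drive the induction. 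First, because $\hat b$ is unmatched in $M$, no arriving vertex ever selects $\hat b$ in the $\sigma$-run, so the $\sigma$-choices always avoid $\hat b$. Second, in mode~1 the free-neighbour sets of the two runs differ only in that $\hat b$ is free solely in the $\sigma$-run while $c_p$ is free solely in the $\sigma'$-run.

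For the inductive step I would compare, for the incoming vertex $a$, its minimum-rank free neighbour in each run. If $a$ does not grab the current discrepancy vertex ($\hat b$ in mode~0, or $c_p$ in mode~1) in the $\sigma'$-run, then by the two observations it makes the same choice in both runs and the invariant is preserved unchanged. Otherwise $a$ takes $\hat b$ (respectively $c_p$) in the $\sigma'$-run, while its $\sigma$-choice is a genuinely new vertex $c_{p+1}$ that is free in both runs and that $a$ passed over in the $\sigma'$-run; this yields $\sigma'(c_p) < \sigma'(c_{p+1})$ and extends the path while preserving the increasing-rank property (the subcase where $a$ has no $\sigma$-partner simply terminates the path at an $A$-vertex and raises $|M'|$ by one). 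I expect the main obstacle to be precisely this case analysis: one must verify that the discrepancy never spreads beyond a single alternating path and that the grabbed vertex always beats the $\sigma$-choice under $\sigma'$, which is exactly what forces the rank monotonicity. As a sanity check on the single-path part, note that it can alternatively be read off from the Monotonicity theorem, since deleting $\hat b$ leaves $M$ unchanged (it is unmatched) and turns $M'$ into the same matching up to one alternating path; the rank-ordering, however, still requires the coupling argument above.
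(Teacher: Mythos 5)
Your proof is correct, and it is essentially the canonical argument: the paper itself does not prove this lemma (it defers to Lemma~4 of \cite{bm08}), and the coupling you build---a single alternating path rooted at the upgraded vertex with strictly increasing ranks along it, every matched $A$-vertex shifted to its predecessor on the path---is exactly the argument of that source, the same machinery the paper redeploys with properties (1)--(3) inside its proof of Lemma~\ref{lem:birnbaum-mathieu}. Two remarks. First, your insistence on measuring the partners' ranks under $\sigma'$ is not merely a ``meaningful reading'' but a necessary correction of the statement as printed: for the first $A$-vertex $a_1$ on the path, whose $M'$-partner is the upgraded vertex $\hat b$ itself, one always has $\sigma(\hat b) > \sigma(M(a_1))$, since in the $\sigma$-run $a_1$ chose its partner while $\hat b$ was free and adjacent (take $A = \{a\}$, $B = \{b_1, b_2\}$ with $a$ adjacent to both, $\sigma$ the identity, and upgrade the unmatched $b_2$ to rank $1$: then $b' = b_2$ but $\sigma(b_2) > \sigma(b_1)$). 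So the literal inequality $\sigma(b') \le \sigma(b)$ fails at $a_1$, whereas your $\sigma'$-version holds everywhere; note also that the only consequences the paper actually uses (in Theorem~\ref{thm:adv-cat}: all $A_2$-vertices matched, and $|M| \ge |M_G|$) follow from the ``every matched $A$-vertex stays matched'' part, which your argument fully delivers. Second, a small bookkeeping gap you already half-acknowledge: once the path terminates at an $A$-vertex with no free $\sigma$-neighbour (your augmenting subcase), the mode-1 invariant no longer literally applies---$\hat b$ is still free only in the $\sigma$-run, but no $B$-vertex is free only in the $\sigma'$-run---so a careful write-up needs a third, terminal mode in which, by your first observation (the $\sigma$-run never selects $\hat b$), all subsequent choices coincide. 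Your sketch plainly intends this, and with that mode added the induction closes.
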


\begin{theorem} \label{thm:adv-cat}
 Alg.~\ref{alg:advice-category} is a $\frac{3}{5}$-competitive online algorithm
 for \textsc{MBM} using $m$ advice bits.
\end{theorem}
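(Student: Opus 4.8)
The plan is to compare the final matching $M' = \Rank(\sigma_c)$ with the greedy matching $M = \Rank(\sigma)$ computed by the oracle (where $\sigma$ is the identity), exploiting the rigid structure the two categories impose. Write $m_0 = |M|$, let $A_M = V(M)\cap A$ and $B_2 = V(M)\cap B$ be the greedy-matched (category-$2$) vertices, and let $A^\ast = A\setminus A_M$, $B_1 = B\setminus B_2$ be the greedy-unmatched (category-$1$, promoted) vertices; fix a maximum matching $M^\ast$, so $\opt = |M^\ast|$. First I would record two facts that follow purely from maximality of the greedy run: every $a\in A^\ast$ satisfies $\Gamma_G(a)\subseteq B_2$, and every $b\in B_1$ satisfies $\Gamma_G(b)\subseteq A_M$ (an unmatched $A$-vertex adjacent to an unmatched $B$-vertex at its arrival is impossible). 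Consequently the edges of $M^\ast$ split into three types: $A_M$--$B_1$ edges (say $t_1$ of them), $A^\ast$--$B_2$ edges ($t_2$), and $A_M$--$B_2$ edges ($t_3$); in particular $m_0\ge t_1+t_3$ and $m_0\ge t_2+t_3$, and $\opt=t_1+t_2+t_3$.

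The heart of the argument is two lower bounds on $M'$. The first is that every vertex of $A_M$ remains matched in $M'$, so that $|M'|\ge m_0$; I would obtain this by realising the passage from $\sigma$ to $\sigma_c$ as a sequence of single-vertex promotions of the $B_1$-vertices to the front and invoking Lemma~\ref{lem:upgrade} at each step. The point to verify is that each promoted vertex is still unmatched at the moment it is promoted, which holds because a single promotion only turns the promoted vertex from unmatched to matched (displacing a category-$2$ vertex), so the not-yet-promoted vertices of $B_1$ stay unmatched until their turn. The second bound concerns the promoted part: since $\Gamma_G(B_1)\subseteq A_M$, the restriction $N = M'\cap(A_M\times B_1)$ is exactly the $B_1$-part of $M'$ and is a maximal matching of $G[A_M\cup B_1]$ (a free low-rank $B_1$-vertex adjacent to a free $A_M$-vertex cannot occur), so $|N|\ge \tfrac12\,\opt(A_M,B_1)\ge \tfrac12 t_1$.

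It then remains to bound $\gamma := |A^\ast\cap V(M')|$, the number of $A^\ast$-vertices that end up matched; because $\Gamma_G(A^\ast)\subseteq B_2$ forces each of them into $B_2$, we have $|M'| = m_0 + \gamma$. Here I would use maximality of $M'$: for each $M'$-unmatched, $M^\ast$-covered $a\in A^\ast$, the partner $M^\ast(a)\in B_2$ must be matched in $M'$, and $a\mapsto M^\ast(a)$ injects these vertices into the matched set $W_2 = B_2\cap V(M')$. Writing $W_2$ as the $A^\ast$-matched part ($\gamma$) plus the $A_M$-matched part ($\beta := m_0-|N|$, using that all of $A_M$ is matched), this injection yields $t_2\le 2\gamma+\beta$, hence $2\gamma\ge t_2-m_0+|N|\ge t_2-m_0+\tfrac12 t_1$. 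Combining gives $|M'|\ge \tfrac12 m_0+\tfrac12 t_2+\tfrac14 t_1$, and together with $|M'|\ge m_0$ and the constraints above a short balancing computation over $t_1,t_2,t_3,m_0$ (the extremal instance being $m_0=\tfrac35\opt$, $t_1=t_2=\tfrac25\opt$, $t_3=\tfrac15\opt$) yields $|M'|\ge\tfrac35\opt$.

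I expect the main obstacle to be making the first lower bound rigorous, that is, showing that promoting the \emph{entire} set $B_1$ to the front cannot unmatch any vertex of $A_M$: Lemma~\ref{lem:upgrade} is stated for a single vertex, so one must track carefully that no $B_1$-vertex is re-matched before it is itself promoted and that only category-$2$ vertices are ever displaced. Once the two structural bounds $|M'|\ge m_0$ and $2\gamma\ge t_2-m_0+\tfrac12 t_1$ are established, the final quantitative step is a routine optimisation and poses no difficulty.
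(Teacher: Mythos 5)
Your proposal is correct and follows the paper's own proof essentially step for step: your $N$, $\beta$, $\gamma$ and $t_1,t_2,t_3$ are the paper's $M_{21}, M_{22}, M_{12}$ and $|B_1^*|, |A_1^*|$, your injection bound $t_2 \le 2\gamma+\beta$ is exactly the paper's $|M_{12}| \ge \frac{1}{2}(|A_1^*| - |M_{22}|)$ obtained there via maximality of $M_{12}$ in $G[A_1 \cup (B_2 \setminus B(M_{22}))]$, and your final balancing over $t_1,t_2,t_3,m_0$ (with the same extremal instance) is the paper's $|M| \ge \max\{|M_G|, \frac{3}{4}|M^*| - \frac{1}{4}|M_G|\} \ge \frac{3}{5}|M^*|$. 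The one welcome refinement is that you make explicit the iterated, one-vertex-at-a-time application of Lemma~\ref{lem:upgrade} (tracking that each $B_1$-vertex is still unmatched when promoted) needed to conclude that all of $A_M$ stays matched, a point the paper glosses over by citing the lemma once.
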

\begin{proof}
 Let $M$ denote the matching computed by the algorithm.
 Let $A_2 \subseteq A$, $B_2 \subseteq B$ be the subsets of vertices that are matched in $M_G$.
 Further, let $A_1 = A \setminus A_2$ and $B_1 = B \setminus B_2$ (the vertices not matched in $M_G$).
See Figure~\ref{fig:adv-cat} for an illustration of these quantities.

\begin{figure}[H]
{\small
\begin{center}
\includegraphics[height=2cm]{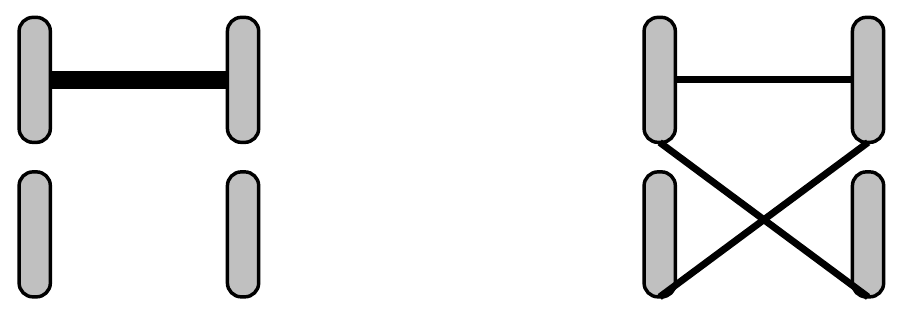}

$\Rank(\sigma)$ \hspace{1.9cm} $\,\Rank(\sigma_c)$

\vspace{-2.3cm}

$A_2$ \hspace{1.5cm} $B_2$ \hspace{1.2cm} $A_2$ \hspace{1.5cm} $B_2$

\vspace{0.6cm}

$A_1$ \hspace{1.5cm} $B_1$ \hspace{1.2cm} $A_1$ \hspace{1.5cm} $B_1$

\vspace{-1.6cm}
$M_G$ \hspace{3.2cm} $M_{22}$

\vspace{0.25cm}
$ $  \hspace{3.3cm} $M_{21}$

\vspace{0.59cm}
$ $  \hspace{3.3cm} $M_{12}$

\end{center}
}
\caption{Quantities employed in the analysis of Algorithm~\ref{alg:advice-category}.}
\label{fig:adv-cat}
\end{figure}

 Then, for $i \in \{1, 2\}$, let $B_i^* = B_i \cap V(M^*)$. Let
 $M_{ij} = \{ab \in M \, | \, a \in A_i \mbox{ and } b \in B_j \}$.
 Then, $M = M_{21} \cup M_{12} \cup M_{22}$ since $M_{11} = \emptyset$ (the input graph does
 not contain any edges between $A_1$ and $B_1$ since otherwise some of them would also be contained in $M_G$).
 This setting is illustrated in Figure~\ref{fig:adv-cat} in the appendix. We will bound now the sizes of $M_{21}, M_{12}$ and $M_{22}$ separately:
 \begin{itemize}
  \item \textit{Bounding $|M_{21}|$.} Since $B_1$-vertices are preferred over $B_2$-vertices in
  $\Rank(\sigma_c)$ and since there are no edges between $A_1$ and $B_1$, $M_{21}$ is a maximal matching between
 $A_2$ and $B_1$. Since $opt(A_2, B_1) = |B_1^*|$, we have $|M_{21}| \ge \frac{1}{2} |B_1^*|.$

 \item \textit{Bounding $|M_{22}|$.} By Lemma~\ref{lem:upgrade}, all $A_2$-vertices are matched in $M$. Thus,
 $|M_{22}| = |A_2| - |M_{21}|.$

 \item \textit{Bounding $|M_{12}|$.} The algorithm finds a maximal matching between $A_1$ and
 $B_2 \setminus B(M_{22})$. Since $opt(A_1, B_2) \ge |A_1^*|$, we have
 $opt(A_1, B_2 \setminus B(M_{22}) ) \ge |A_1^*| - |M_{22}|$,
 and thus $|M_{12}| \ge \frac{1}{2} ( |A_1^*| - |M_{22}| ).$
 \end{itemize}

\noindent We combine the previous bounds and we obtain:
 $$
|M| = |M_{21}| + |M_{22}| + |M_{12}| \ge
|A_2| + \frac{1}{2} ( |A_1^*| - |A_2| + |M_{21}| ) \ge \frac{1}{2} ( |A_1^*| + |A_2| + \frac{1}{2} |B_1^*| ).
 $$
Next, note that $|A_2| \ge |B_1^*|$ and $|A_1^*| + |B_1^*| = |M^*|$. We thus
obtain $|M| \ge \frac{1}{2} |M^*| + \frac{1}{4} |B_1^*|$. Since
$|B_1^*| \ge |M^*| - |M_G|$, we obtain $|M| \ge \frac{3}{4} |M^*| - \frac{1}{4} |M_G|$. Furthermore,
Lemma~\ref{lem:upgrade} implies $|M| \ge |M_G|$, and hence
$|M| \ge \max \{ |M_G|, \frac{3}{4} |M^*| - \frac{1}{4} |M_G| \}$ which is at least $\frac{3}{5} |M^*|$.
\end{proof}

\bibliography{permGuess}

\newpage
\appendix

\section{The Construction of B\"{o}ckenhauer et al. for Maximization Problems}
\label{appendix:boeckenhauer}

B\"{o}ckenhauer et al. \cite{bkkk11} showed that a deterministic advice algorithm can be obtained
from a randomized algorithm for a minimization problem. We provide a similar theorem for maximization
problems. The proof follows the proof of \cite{bkkk11} and is provided only for completeness of our work.
\label{sec:min-to-max}
\begin{theorem}
  For a maximization online problem $P$, let $\mathcal{I}(n)$ be the set of all possible inputs of
  length $n$. Suppose that $R$ is a randomized algorithm with a expected competitive ratio
  $E(n)$. Then, for any fixed $\eps$, $0 < \eps < 1$, it is possible to construct a deterministic
  algorithm that uses a total advice of size
  $\lceil \log n \rceil + 2 \lceil \log \lceil \log n \rceil \rceil + \left \lceil \log \left\lceil\frac{\log I(n)}{\log (\delta)}\right\rceil \right\rceil$ and that has a competitive ratio of at least
  $(1 - \eps)E(n)$, where $I(n) = |\mathcal{I}(n)|$ and $\delta = \frac{1-E(n)+\eps E(n)}{E(n)}$.
\end{theorem}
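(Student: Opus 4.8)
The plan is to derandomize $R$ by the probabilistic covering argument of B\"{o}ckenhauer et al.~\cite{bkkk11}, adapted to a maximization objective: view the random bits of $R$ as selecting a deterministic strategy, show that a \emph{small} family of such strategies already handles every input of length $n$, and use the advice tape to name the index of a good strategy in that family (together with a self-delimiting encoding of $n$). The whole construction is non-uniform and runs in exponential time, which is acceptable here.

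First I would fix $n$ and regard $R$ as a distribution over deterministic online algorithms $\{R_r\}_r$, one for each fixing $r$ of the random string, writing $\profit_r(I)$ for the profit of $R_r$ on input $I$. By hypothesis $\Exp_r[\profit_r(I)] \ge E(n)\cdot\opt(I)$ for every $I \in \mathcal{I}(n)$, while trivially $\profit_r(I) \le \opt(I)$. Call $r$ \emph{good for $I$} if $\profit_r(I) \ge (1-\eps)E(n)\cdot\opt(I)$. A one-line Markov bound applied to the nonnegative deficit $\opt(I)-\profit_r(I)$, whose expectation is at most $(1-E(n))\opt(I)$, against the threshold $\bigl(1-(1-\eps)E(n)\bigr)\opt(I)$ shows that a uniformly random $r$ is good for $I$ except with probability at most $1/\delta$, with $\delta$ the quantity named in the statement (the reciprocal per-input failure probability). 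The role of $\profit_r(I)\le\opt(I)$ here is essential: Markov bounds only the upper tail, so to control the \emph{lower} tail of a maximization objective one passes to the deficit.

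Next I would establish by the probabilistic method that a single family of $t := \lceil \log I(n)/\log\delta\rceil$ strategies covers all inputs at once. Drawing $r_1,\dots,r_t$ independently from $R$'s distribution, the probability that a \emph{fixed} $I$ has no good $r_j$ is at most $\delta^{-t}$; a union bound over the $I(n)$ inputs of length $n$ gives total failure probability at most $I(n)\cdot\delta^{-t}$, which is below $1$ once $t>\log I(n)/\log\delta$, so $t=\lceil\log I(n)/\log\delta\rceil$ strategies suffice. Hence some family $\{r_1,\dots,r_t\}$ is good for every $I \in \mathcal{I}(n)$; fixing the lexicographically first such family makes it a deterministic function of $n$ that both the oracle and the algorithm can recompute by exhaustive search, since $\mathcal{I}(n)$ is finite.

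Finally I would assemble the advice and the algorithm and account for the bit count. The oracle writes a self-delimiting encoding of $n$ (writing $n$ in binary costs $\lceil\log n\rceil$ bits, and a self-delimiting Elias-style prefix specifying that bit-length costs $2\lceil\log\lceil\log n\rceil\rceil$ bits), followed by the index $j \in \{1,\dots,t\}$ of a strategy from the canonical family that is good for the \emph{actual} input, costing $\lceil\log\lceil\log I(n)/\log\delta\rceil\rceil$ bits; summing gives exactly the claimed length. The algorithm reads $n$, reconstructs the same canonical family $\{r_1,\dots,r_t\}$, reads $j$, and simulates $R_{r_j}$, which by construction attains profit at least $(1-\eps)E(n)\cdot\opt(I)$, i.e.\ competitive ratio at least $(1-\eps)E(n)$. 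The main obstacle, and the only genuine point of departure from the minimization original, is the second step: getting the per-input failure probability down to $1/\delta$ through the correct deficit tail bound and then confirming that $\lceil\log I(n)/\log\delta\rceil$ strategies really drive the union bound below $1$; the self-delimiting encoding is routine but must be pinned down precisely, since the algorithm needs $n$ (hence $I(n)$ and $t$) \emph{before} it can parse the strategy index off the tape.
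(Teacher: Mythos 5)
Your proposal is correct in outline but reaches the covering family by a genuinely different route than the paper. The paper proceeds by explicit greedy extraction: it forms the $I(n)\times R(n)$ matrix of competitive ratios, uses an averaging argument to find one column (random string) achieving ratio at least $(1-\eps)E(n)$ on more than an $\eps/\delta$-fraction of the surviving rows, deletes those rows, and iterates; after $\left\lceil \log I(n)/\log\delta\right\rceil$ rounds every input is covered, and the family is canonical by construction (oracle and algorithm simply rerun the same deterministic greedy process). You instead obtain the family non-constructively: a Markov bound on the deficit $\opt(I)-\profit_r(I)$ shows a random string is good for a fixed input with probability at least $\eps/\delta$, then $t$ independent draws plus a union bound over $\mathcal{I}(n)$ give existence, and taking the lexicographically first good family canonicalizes. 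The two arguments are the same averaging computation, transposed --- your per-input success probability of at least $\eps/\delta$ is exactly the paper's per-column covered fraction --- and your version is shorter and more modular, while the paper's buys a directly deterministic construction needing no canonicalization step and matching the original B\"{o}ckenhauer et al.\ proof. Your bit accounting and the self-delimited encoding of $n$ agree with the paper exactly, including the observation that $n$ must be parsed first so that $t$ is known before reading the index.

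One quantitative step deserves a flag, though it does not separate you from the paper. Markov on the deficit gives per-input failure probability at most $\frac{1-E(n)}{1-E(n)+\eps E(n)}$, not $1/\delta=\frac{E(n)}{1-E(n)+\eps E(n)}$ as you assert; the two coincide only at $E(n)=\frac{1}{2}$, and your identification (hence the sufficiency of $t=\left\lceil\log I(n)/\log\delta\right\rceil$ draws) is valid precisely when $E(n)\ge\frac{1}{2}$. The paper's own proof makes the equivalent leap: from $|\mathcal{H}(n,w)|>\frac{\eps}{\delta}I(n)$ it asserts that $I(n)/\delta$ rows remain, whereas the surviving fraction guaranteed by the averaging is again $\frac{1-E(n)}{1-E(n)+\eps E(n)}$, which is at most $1/\delta$ only when $E(n)\ge\frac{1}{2}$. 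For the intended application to \textsc{MBM}, where $E(n)=1-1/e>\frac{1}{2}$, both versions go through (and when $E(n)>\frac{1}{2}$ the claimed advice length is if anything generous), so your proposal matches the paper's proof in both substance and level of rigor; to be airtight for all $E(n)$ you would replace $\delta$ by $\frac{1-E(n)+\eps E(n)}{1-E(n)}$ in the family-size calculation. A minor edge case: with the ceiling, $I(n)\cdot q^{t}$ can equal $1$ rather than fall strictly below it, which one extra draw (or a strict-inequality version of the tail bound) repairs.
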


\begin{proof}
  Let $r(n)$ be the maximum number of random bits used by $R$ for a sequence of length $n$. Hence, there are $R(n) = 2^{r(n)}$ possible random bit strings. First, we construct a subset, $\mathcal{W}(n)$, of the $R(n)$ random strings with size $\frac{\log I(n)}{\log (1+\eps)}$ such that, for any input $i \in \mathcal{I}(n)$, there exists a random string $w \in \mathcal{W}(n)$ such that $R$, using $w$ as its random string, has a competitive ratio of at least $(1 - \eps)E(n)$.

Let $A$ be a matrix with $I(n)$ rows and $R(n)$ columns, where $A_{i,j}$ is the competitive ratio of $R$ on input $i$ given random string $j$. For any input, the expected competitive ratio of $R$ is at least $E(n)$. So, for any $i$, $1 \le i \le I(n)$,
$$\frac{1}{R(n)}\sum_{j = 1}^{R(n)} A_{i,j} \ge E(n)~\text{and, hence,}~\sum_{i = 1}^{I(n)}\sum_{j = 1}^{R(n)} A_{i,j} \ge I(n) \cdot R(n) \cdot E(n)~.$$
This implies that there exists a random bit string $w$ at some index $j$ such that
\begin{equation}\label{eq:matUpper}
  \sum_{i = 1}^{I(n)} A_{i,j} \ge I(n) \cdot E(n) ~.
\end{equation}
The random bit string $w$ will be included in the set $\mathcal{W}(n)$. Let $\mathcal{H}(n,w)$ be the set of input indexes such that, for input $i \in \mathcal{H}(n,w)$, $A_{i,j} \ge (1 - \eps)E(n)$.  Hence,
\begin{align}\label{eq:matLower}
  \sum_{i = 1}^{I(n)} A_{i,j} &< (I(n)-|\mathcal{H}(n,w)|)(1 - \eps)E(n) + |\mathcal{H}(n,w)|~.
\end{align}

Combining \eqref{eq:matUpper} and \eqref{eq:matLower}, we have that $$|\mathcal{H}(n,w)| > \frac{\eps E(n)}{1 - E(n) + \eps E(n)}I(n) \ge \frac{\eps}{\delta}I(n) \text{~for~} \delta = \frac{1-E(n)+\eps E(n)}{E(n)}.$$
That is, there are at least $\frac{\eps}{\delta}I(n)$ inputs, where $R$, using the random string $w$, has a competitive ratio of at least $(1 - \eps)E(n)$. Each input sequence corresponding to an index in $\mathcal{H}(n,w)$ is said to be \emph{covered} by the bit string $w$.

Let $A'$ be the matrix $A$ with the column corresponding to the string $w$ removed and all the rows at the indexes in $\mathcal{H}(n,w)$ removed. Note that there are $I(n)/\delta$ rows in $A'$, for any row $i$ in $A'$, $\frac{1}{R(n)-1}\sum_{j = 1}^{R(n)-1} A'_{i,j} \ge E(n)$ as, by definition, the contribution of the random string $w$ to a remaining row in $A'$ was less than $E(n)$. Using $A'$, this process can be repeated. After $\lceil \log_{\delta} I(n) \rceil = \left \lceil \frac{\log I(n)}{\log (\delta)} \right \rceil$ iterations of this process, all the input sequence are covered by a bit string in $\mathcal{W}(n)$.

A deterministic algorithm with advice $D$ is defined as follows. Prior to serving a request, $D$ reads $\lceil \log n \rceil + 2 \lceil \log \lceil \log n \rceil \rceil$ advice bits containing the length of the sequence, $n$, encoded as a self-delimited encoding (cf.~\cite{BockKKR14}). Then, $D$ computes the set $\mathcal{W}(n)$ and reads $\left \lceil \log \left\lceil \frac{\log I(n)}{\log (\delta)}\right\rceil\right\rceil$ bits of advice containing an index in $\mathcal{W}(n)$ to the string $w$ that covers the input sequence. Finally, $D$ simulates $R$ on the input sequence with $w$ as the random bit string.
\end{proof}

\end{document}